\newtheorem{lemma}{Lemma}
\DeclarePairedDelimiter{\ceil}{\lceil}{\rceil}
\DeclarePairedDelimiter{\floor}{\lfloor}{\rfloor}
\DeclarePairedDelimiter{\abs}{\lvert}{\rvert}
\newcolumntype{C}{>{$}c<{$}}
\newcommand\mybigstrut[1][4pt]{\setlength\bigstrutjot{#1}\bigstrut[t]}
\newcommand\mynegbigstrut[1][2pt]{\setlength\bigstrutjot{#1}\bigstrut[b]}
\newcommand{\Mod}[1]{\ (\mathrm{mod}\ #1)}
\begin{document}

\preprint{APS/123-QED}

\title{Matrix optimization on universal unitary photonic devices}

\author{Sunil Pai}
\email{sunilpai@stanford.edu}
\affiliation{Department of Electrical Engineering, Stanford
University, Stanford, CA 94305, USA}
\author{Ben Bartlett}
\affiliation{Department of Applied Physics, Stanford University, Stanford, CA 94305, USA}
\author{Olav Solgaard}
\affiliation{Department of Electrical Engineering, Stanford
University, Stanford, CA 94305, USA}
\author{David A. B. Miller}
\email{dabm@stanford.edu}
\affiliation{Department of Electrical Engineering, Stanford
University, Stanford, CA 94305, USA}

\begin{abstract}
Universal unitary photonic devices can apply arbitrary unitary transformations to a vector of input modes and provide a promising hardware platform for fast and energy-efficient machine learning using light. We simulate the gradient-based optimization of random unitary matrices on universal photonic devices composed of imperfect tunable interferometers. If device components are initialized uniform-randomly, the locally-interacting nature of the mesh components biases the optimization search space towards banded unitary matrices, limiting convergence to random unitary matrices. We detail a procedure for initializing the device by sampling from the distribution of random unitary matrices and show that this greatly improves convergence speed. We also explore mesh architecture improvements such as adding extra tunable beamsplitters or permuting waveguide layers to further improve the training speed and scalability of these devices.
\end{abstract}

\pacs{85.40.Bh}
\keywords{universal linear optics, photonic neural networks, optimization, machine learning, random matrix theory}
\maketitle


\section{\label{sec:intro}Introduction}

Universal multiport interferometers are optical networks that perform arbitrary unitary transformations on input vectors of coherent light modes. Such devices can be used in applications including quantum computing (e.g. boson sampling, photon walks) \cite{Grafe2016IntegratedWalks, Carolan2015UniversalOptics, Spring2013BosonChip, Harris2017QuantumProcessor}, mode unscramblers \cite{Annoni2017UnscramblingModes}, photonic neural networks \cite{Shen2017DeepCircuits, Hughes2018TrainingMeasurement, Williamson2019ReprogrammableNetworks}, and finding optimal channels through lossy scatterers \cite{Miller2013EstablishingAutomatically}. While universal photonic devices have been experimentally realized at a relatively small scale \cite{Shen2017DeepCircuits, Annoni2017UnscramblingModes}, commercial applications such as hardware for energy-efficient machine learning and signal processing can benefit from scaling the devices to up to $N = 1000$ modes. At this scale, fabrication imperfections and components with scale-dependent sensitivities can negatively affect performance.

One canonical universal photonic device is the rectangular multiport interferometer mesh \cite{Clements2016AnInterferometers} shown in Figure \ref{fig:rectangularmesh} interfering $N = 8$ modes. In multiport interferometers, an $N$-dimensional vector is represented by an array of modes arranged in $N$ single-mode waveguides. A unitary operation is applied to the input vector by tuning Mach-Zehnder interferometers (MZIs) represented by the red dots of Figure \ref{fig:rectangularmesh}. Each MZI is a two-port optical component made of two 50:50 beamsplitters and two tunable single-mode phase shifters. Other mesh architectures have been proposed, such as the triangular mesh \cite{Reck1994ExperimentalOperator} (shown in Appendix \ref{sec:haarmeasure}), the universal cascaded binary tree architecture \cite{Miller2013Self-aligningCoupler}, and lattice architectures where light does not move in a forward-only direction \cite{Perez2017HexagonalInterferometers, Perez2018SiliconCore, Perez2017MultipurposeCore}.

\begin{figure}[t]
    \centering
    \includegraphics[width=0.48\textwidth]{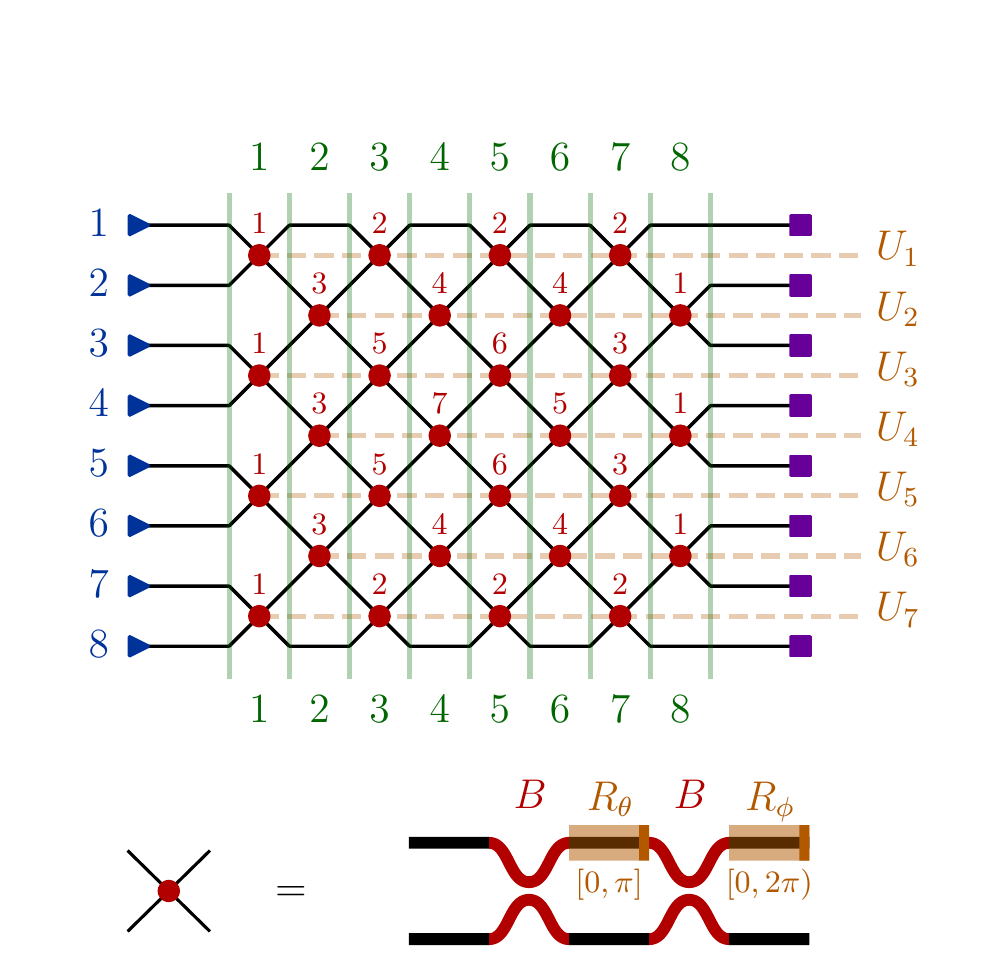}
    \caption{Mesh diagram representing the locally interacting rectangular mesh for $N = 8$. The inputs (and single-mode phase shifts at the inputs) are represented by blue triangles. Outputs are represented by purple squares. The MZI nodes are represented by red dots labelled with sensitivity index $\alpha_{n\ell}$ (e.g., $\alpha_{44} = 7$ is the most sensitive node). The nodes represent the Givens rotation $U_n$ (in orange) at vertical layer $\ell$ (in green). Each photonic MZI node can be represented with 50:50 beamsplitters $B$ (red) and phase shifters $R_\theta, R_\phi$ (orange) with required ranges $0 \leq \theta \leq \pi$ and $0 \leq \phi < 2\pi$.}
    \label{fig:rectangularmesh}
\end{figure}

The scalability of optimizing mesh architectures, especially using gradient-based methods, is limited by the ability of the locally interacting architecture to control the output powers in the mesh. If phase shifts in the mesh are initialized uniform-randomly, light propagates through the device in a manner similar to a random walk. The off-diagonal, nonlocal elements of the implemented unitary matrix tend to be close to zero because transitions between inputs and outputs that are far apart have fewer paths (e.g., input $1$ and output $8$ in Figure \ref{fig:rectangularmesh} have a single path). The resulting mesh therefore implements a unitary matrix with a banded structure that is increasingly pronounced as the matrix size increases.

In many applications such as machine learning \cite{Shen2017DeepCircuits} and quantum computing \cite{Russell2017DirectMatrices, Carolan2015UniversalOptics}, we avoid this banded unitary matrix behavior in favor of random unitary matrices. A random unitary matrix is achieved when the device phase shifts follow a distribution derived from random matrix theory \cite{Hurwitz1897UberIntegration, Zyczkowski1994RandomMatrices, Diaconis2017HurwitzMathematics, Spengler2012CompositeGroups, Russell2017DirectMatrices}. In the random matrix theory model, we assign a sensitivity index to each component that increases towards the center of the mesh, as shown in Figure \ref{fig:rectangularmesh}. The more sensitive components toward the center of the mesh require higher transmissivities and tighter optimization tolerances. If the required tolerances are not met, the implemented unitary matrix begins to show the undesired banded behavior.

In Section \ref{sec:meshintro}, we introduce the photonic mesh architecture and sources of error that can exacerbate the banded unitary matrix problem. In Section \ref{sec:haarinit}, we explicitly model the component settings to implement a random unitary matrix and ultimately avoid the banded unitary matrix problem. We propose a ``Haar initialization" procedure that allows light to propagate uniformly to all outputs from any input. We use this procedure to initialize the gradient-based optimization of a photonic mesh to learn unknown random unitary matrices given training data. We show that this optimization converges even in the presence of significant simulated fabrication errors.

In Sections \ref{sec:architectures} and \ref{sec:simulation}, we propose and simulate two alterations to the mesh architecture that further improve gradient-based optimization performance. First, we add redundant MZIs in the mesh to reduce convergence error by up to five orders of magnitude. Second, we permute the mesh interactions while maintaining the same number of tunable components, which increases allowable tolerances of phase shifters, decreases off-diagonal errors, and improves convergence time.

\section{\label{sec:meshintro} Photonic Mesh}

We define the photonic mesh when operated perfectly and then discuss how beam splitter or phase shift errors can affect device performance.

\subsection{Photonic unitary implementation}

A single-mode phase shifter can perform an arbitrary $\mathrm{U}(1)$ transformation $e^{i\phi}$ on its input. A phase-modulated Mach-Zehnder interferometer (MZI) with perfect ($50:50$) beamsplitters can apply to its inputs a unitary transformation $U$ of the form:
\begin{equation}\label{eqn:mzidefinition}
\begin{aligned}
U(\theta, \phi) &:= R_\phi B R_\theta B \\
&= \begin{bmatrix}e^{i\phi} & 0 \\ 0 & 1\end{bmatrix} \frac{1}{\sqrt{2}} \begin{bmatrix}1 & i \\ i & 1\end{bmatrix} \begin{bmatrix}e^{i\theta} & 0 \\ 0 & 1\end{bmatrix} \frac{1}{\sqrt{2}} \begin{bmatrix}1 & i \\ i & 1\end{bmatrix} \\
&= ie^{\frac{i\theta}{2}} \begin{bmatrix}e^{i\phi}\sin \frac{\theta}{2} & e^{i\phi}\cos \frac{\theta}{2} \\ \cos \frac{\theta}{2} & -\sin \frac{\theta}{2} \\
\end{bmatrix},
\end{aligned}
\end{equation}
where $B$ is the beamsplitter operator, $R_\theta, R_\phi$ are upper phase shift operators. Equation \ref{eqn:mzidefinition} is represented diagrammatically by the configuration in Figure \ref{fig:rectangularmesh}.\footnote{Other configurations with two independent phase shifters between the beamsplitters $B$ are ultimately equivalent for photonic meshes \cite{Miller2015PerfectComponents}.} If one or two single-mode phase shifters are added at the inputs, we can apply an arbitrary $\mathrm{SU}(2)$ or $\mathrm{U}(2)$ transformation to the inputs, respectively.

We define the transmissivity and reflectivity of the MZI as:
\begin{equation} \label{eqn:tr}
    \begin{aligned}
    t &:= \cos^2\left(\frac{\theta}{2}\right) = |U_{12}|^2 = |U_{21}|^2 \\
    r &:= \sin^2\left(\frac{\theta}{2}\right) = 1 - t = |U_{11}|^2 = |U_{22}|^2.
    \end{aligned}
\end{equation}
In this convention, when $\theta = \pi$, we have $r = 1, t = 0$ (the MZI ``bar state''), and when $\theta = 0$, we have $r = 0, t = 1$ (the MZI ``cross state'').

If there are $N$ input modes and the interferometer is connected to waveguides $n$ and $n + 1$ then we can embed the $2 \times 2$ unitary $U$ from Equation \ref{eqn:mzidefinition} in $N$-dimensional space with a locally-interacting unitary ``Givens rotation" $U_n$ defined as:
\begin{equation}\label{eqn:givensrotation}
U_{n} \,:=\,\, \begin{blockarray}{cccccccc}
 &  & \small{n} & \small{n+1} &  & & \\
\begin{block}{[cccccc]cl}
1   & \cdots &    0   &   0   & \cdots &    0 & & \mybigstrut\\
\vdots & \ddots & \vdots &  \vdots &        & \vdots & &\\
0   & \cdots &    U_{11}  &   U_{12}   & \cdots &    0 & & \small{n} \\
0   & \cdots &    U_{21}  &   U_{22}  & \cdots &    0   & & \small{n+1}\\
\vdots &   & \vdots  & \vdots & \ddots & \vdots & &\\
0   & \cdots &   0 &    0   & \cdots &    1 & & \mynegbigstrut\\
\end{block}
 &  & & &  & & \\
\end{blockarray}.\\
\end{equation}
All diagonal elements are 1 except those labeled $U_{11}$ and $U_{22}$, which have magnitudes of $\sqrt{r} = \sqrt{1-t}$, and all off-diagonal elements are 0 except those labeled $U_{12}$ and $U_{21}$, which have magnitudes of $\sqrt{t}$.

Arbitrary unitary transformations can be implemented on a photonic chip using only locally interacting MZIs \cite{Reck1994ExperimentalOperator}. In this paper, we focus on optimizing a rectangular mesh \cite{Clements2016AnInterferometers} of MZIs; however, our ideas can be extended to other universal schemes, such as the triangular mesh \cite{Miller2013Self-configuringInvited}, as well.

In the rectangular mesh scheme \cite{Clements2016AnInterferometers} of Figure \ref{fig:rectangularmesh}, we represent $\hat{U}_{\mathrm{R}} \in \mathrm{U}(N)$ in terms of $N(N - 1) / 2$ locally interacting Givens rotations $U_{n}$ and $N$ single-mode phase shifts at the inputs represented by diagonal unitary $D(\gamma_1, \gamma_2, \ldots \gamma_N)$:
\begin{equation}\label{eqn:rectangularmesh}
\begin{aligned}
\hat{U}_{\mathrm{R}} &:= \prod_{\ell = 1}^N \prod_{n \in \mathcal{S}_{\ell, N}} U_{n}(\theta_{n\ell}, \phi_{n\ell}) \cdot D(\gamma_1, \gamma_2, \ldots \gamma_N),
\end{aligned}
\end{equation}
where our layer-wise product left-multiplies from $\ell = N$ to 1,\footnote{In general, for matrix products for a sequence $\{M_\ell\}$, we define the multiplication order $\prod_{\ell = 1}^N M_\ell = M_N M_{N-1} \cdots M_1$.} the single-mode phase shifts are $\gamma_n \in [0, 2 \pi)$, and where the Givens rotations are parameterized by $\theta_{n\ell} \in [0, \pi], \phi_{n\ell} \in [0, 2\pi)$.\footnote{Since $\gamma_n, \phi_{n\ell}$ are periodic phase parameters, they are in half-open intervals $[0, 2\pi)$. In contrast, any $\theta_{n\ell} \in [0, \pi]$ must be in a closed interval to achieve all transmissivities $t_{n\ell} \in [0, 1]$.} We define the top indices of each interacting mode for each vertical layer as the set $\mathcal{S}_{\ell, N} = \{n \in [1, 2, \ldots N-1] \mid n \Mod 2 \equiv \ell \Mod 2\}$. This vertical layer definition follows the convention of Refs. \cite{Jing2017TunableRNNs, Hughes2018TrainingMeasurement} and is depicted in Figure \ref{fig:rectangularmesh}, where $\ell$ represents the index of the vertical layer.

\subsection{Beamsplitter error tolerances}
The expressions in Equations \ref{eqn:mzidefinition} and \ref{eqn:rectangularmesh} assume perfect fabrication. In practice, however, we would like to simulate how practical devices with errors in each transfer matrix $B, R_\phi, R_\theta$ in Equation \ref{eqn:mzidefinition} impact optimization performance.

In fabricated chip technologies, imperfect beamsplitters $B$ can have a split ratio error $\epsilon$ that change the behavior of the red 50:50 coupling regions in Figure \ref{fig:rectangularmesh} or $B$ in Equation \ref{eqn:mzidefinition}. The resultant scattering matrix $U_\epsilon$ with imperfect beamsplitters $B_\epsilon$ can be written as:
\begin{equation}\label{eqn:beamsplittererror}
\begin{aligned}
B_\epsilon &:= \frac{1}{\sqrt{2}}\begin{bmatrix}\sqrt{1 + \epsilon} & i\sqrt{1 - \epsilon} \\ i\sqrt{1 - \epsilon} & \sqrt{1 + \epsilon}\end{bmatrix}\\
U_\epsilon &:= R_\phi B_\epsilon R_\theta B_\epsilon.
\end{aligned}
\end{equation}
As shown in Appendix \ref{sec:photonicerrorsproof}, if we assume both beamsplitters have identical $\epsilon$, we find $t_\epsilon := t (1 - \epsilon^2) \in [0, 1-\epsilon^2]$ is the realistic transmissivity, $r_\epsilon := r + t \cdot \epsilon^2 \in [\epsilon^2, 1]$ is the realistic reflectivity, and $t, r$ are the ideal transmissivity and reflectivity defined in Equation \ref{eqn:tr}.

The unitary matrices in Equation \ref{eqn:beamsplittererror} cannot express the full transmissivity range of the MZI, with errors of up to $\epsilon^2$ in the transmissivity, potentially limiting the performance of greedy progressive photonic algorithms \cite{Burgwal2017UsingUnitaries, Miller2017SettingMethod, Flamini2017BenchmarkingProcessing}. Our Haar phase theory, which we develop in the following section, determines acceptable interferometer tolerances for calibration of a ``perfect mesh" consisting of imperfect beamsplitters \cite{Miller2015PerfectComponents} given large $N$. We will additionally show that simulated photonic backpropagation \cite{Hughes2018TrainingMeasurement} with adaptive learning can adjust to nearly match the performance of perfect meshes with errors as high as $\epsilon = 0.1$ for meshes of size $N = 128$.

\subsection{Phase shift tolerances}

Another source of uncertainty in photonic meshes is the phase shift tolerances of the mesh which affect the matrices $R_\theta, R_\phi$ of Equation \ref{eqn:mzidefinition}, shown in orange in Figure \ref{fig:rectangularmesh}. Error sources such as thermal crosstalk or environmental drift may result in slight deviance of phase shifts in the mesh from intended operation. Such errors primarily affect the control parameters $\theta_{n\ell}$ that control light propagation in the mesh by affecting the MZI split ratios. This nontrivial problem warrants a discussion of mean behavior and sensitivities (i.e., the distribution) of $\theta_{n\ell}$ needed to optimize a random unitary matrix.

\section{Haar Initialization} \label{sec:haarinit}

\subsection{Cross state bias and sensitivity index}

The convergence of global optimization depends critically on the sensitivity of each phase shift. The gradient descent optimization we study in this paper converges when the phase shifts are correct to within some acceptable range. This acceptable range can be rigorously defined in terms of average value and variance of phase shifts in the mesh that together define an unbiased (``Haar random") unitary matrix.\footnote{A Haar random unitary is defined as Gram-Schmidt orthogonalization of $N$ standard normal complex vectors \cite{Spengler2012CompositeGroups, Russell2017DirectMatrices}.} To implement a Haar random unitary, some MZIs in the mesh need to be biased towards cross state ($t_{n\ell}$ near $1$, $\theta_{n\ell}$ near $0$) \cite{Burgwal2017UsingUnitaries, Russell2017DirectMatrices}. This cross state bias correspondingly ``pinches" the acceptable range for transmissivity and phase shift near the limiting cross state configuration, resulting in higher sensitivity, as can be seen in Figure \ref{fig:haarphase}(b).

For an implemented Haar random unitary matrix, low-tolerance, transmissive MZIs are located  towards the center of a rectangular mesh \cite{Russell2017DirectMatrices, Burgwal2017UsingUnitaries} and the apex of a triangular mesh as proven in Appendix \ref{sec:haarmeasure}. For both the triangular and rectangular meshes, the cross state bias and corresponding sensitivity for each MZI depends only on the total number of reachable waveguides ports, as proven in Appendix \ref{sec:haarproof}. Based on this proof, we define the sensitivity index $\alpha_{n\ell} := \abs{I_{n\ell}} + \abs{O_{n\ell}} - N - 1$,\footnote{Note that $1 \leq \alpha_{n\ell} \leq N - 1$, and there are always $N - \alpha_{n\ell}$ MZIs that have a sensitivity index of $\alpha_{n\ell}$.} where $I_{n\ell}$ and $O_{n\ell}$ are the subsets of input and output waveguides reachable by light exiting or entering the MZI, respectively, and $\abs{\cdot}$ denotes set size. Figure \ref{fig:rectangularmesh} and Figure \ref{fig:checkerboardplots}(a) show the sensitivity index for the rectangular mesh, which clearly increases towards the center MZI.

\begin{figure}[h]
    \centering
    \includegraphics[width=0.48\textwidth]{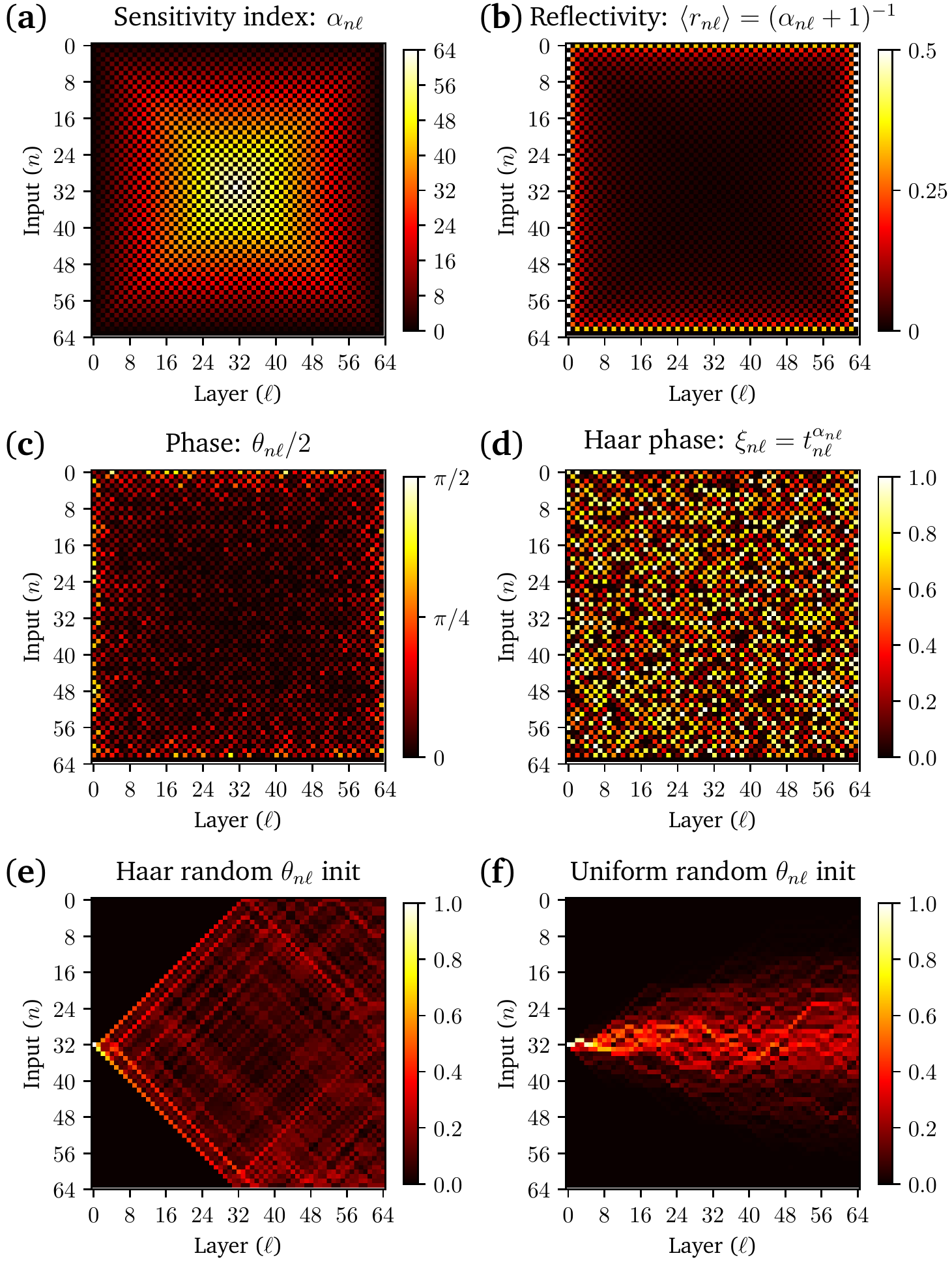}
    \caption{(a) The sensitivity index $\alpha_{n\ell}$ for $N = 64$. (b) Checkerboard plot for the average reflectivity $\left\langle r_{n\ell} \right\rangle$ in a rectangular mesh. (c) Haar-random matrix and run the decomposition in Ref. \cite{Clements2016AnInterferometers} to find phases approaching cross state in the middle of the mesh. (d) The Haar phase $\xi_{n\ell}$ for the rectangular mesh better displays the randomness. (e, f) Field measurements (absolute value) from propagation at input 32 in (e) Haar and (f) uniform random initialized rectangular meshes with $N = 64$.}
    \label{fig:checkerboardplots}
\end{figure}

\subsection{Phase shift distributions and Haar phase}

The external $\phi_{n\ell}, \gamma_n$ phase shifts do not affect the the transmissivity $t_{n\ell}$ and therefore obey uniform random distributions \cite{Russell2017DirectMatrices}. In contrast, the $\theta_{n\ell}$ phase shifts have a probability density function (PDF) that depends on $\alpha_{n\ell}$ \cite{Russell2017DirectMatrices}: 
\begin{equation}\label{eqn:haarpdf}
    \begin{aligned}
    \mathcal{P}_{\alpha_{n\ell}}\left(\frac{\theta_{n\ell}}{2}\right) &= \alpha_{n\ell}\sin\left(\frac{\theta_{n\ell}}{2}\right) \left[\cos\left(\frac{\theta_{n\ell}}{2}\right)\right]^{2\alpha_{n\ell} - 1}.
    \end{aligned}
\end{equation}

The general shape of this distribution is presented in Figure \ref{fig:haarphase}(b), showing how an increase in $\alpha_{n\ell}$ biases $\theta_{n\ell}$ towards the cross state with higher sensitivity.

We define the Haar phase $\xi_{n\ell}$ as the cumulative distribution function (CDF) of $\theta_{n\ell} / 2$ starting from $\theta_{n\ell} / 2 = \pi / 2$:
\begin{equation} \label{eqn:haarphasecdf}
    \xi_{n\ell} := \int_{\pi / 2}^{\theta_{n\ell} / 2} \mathcal{P}_{\alpha_{n\ell}}(\theta) \mathrm{d}\theta.
\end{equation}

Using Equations \ref{eqn:haarpdf} and \ref{eqn:haarphasecdf}, we can define $\xi_{n\ell}(\theta_{n\ell}) \in [0, 1]$ that yields a Haar random matrix:
\begin{equation}\label{eqn:haarphase}
\begin{aligned}
    \xi_{n\ell} &= \left[\cos^2\left(\frac{\theta_{n\ell}}{2}\right)\right]^{\alpha_{n\ell}} = t_{n\ell}^{\alpha_{n\ell}},
\end{aligned}
\end{equation}
where $t_{n\ell}$ represents the transmissivity of the MZI, which is a function of $\theta_{n\ell}$ as defined in Equation \ref{eqn:tr}. 

\subsection{Haar initialization}

In the physical setting, it is useful to find the inverse of Equation \ref{eqn:haarphase} to directly set the measurable transmissivity $t_{n\ell}$ of each MZI using a uniformly varying Haar phase $\xi_{n\ell} \sim \mathcal{U}(0, 1)$, a process we call ``Haar initialization" shown in Figure \ref{fig:checkerboardplots}(c, d):
\begin{equation}\label{eqn:haarinit}
\begin{aligned}
    t_{n\ell} &= \sqrt[\leftroot{-2}\uproot{5}\alpha_{n\ell}]{\xi_{n\ell}} \\
    \theta_{n\ell} &= 2 \arccos \sqrt{t_{n\ell}} = 2 \arccos \sqrt[\leftroot{-2}\uproot{5}2\alpha_{n\ell}]{\xi_{n\ell}},
\end{aligned}
\end{equation}
where the expression for $\theta_{n\ell}$ is just a rearrangement of Equation \ref{eqn:tr}.

Haar initialization can be achieved progressively using a procedure similar to that in Ref. \cite{Miller2017SettingMethod}. If the phase shifters in the mesh are all well-characterized, the transmissivities can be directly set \cite{Russell2017DirectMatrices}. We will show in Section \ref{sec:simulation} that Haar initialization improves the convergence speed of gradient descent optimization significantly.

We can also use Equation \ref{eqn:haarinit} to find the average transmissivity and reflectivity for an MZI parameterized by $\alpha_{n\ell}$ as is found through simulation in Ref. \cite{Burgwal2017UsingUnitaries}:
\begin{equation} \label{eqn:avgbehavior}
    \begin{aligned}
    \left\langle t_{n\ell} \right\rangle &= \int_0^1 d \xi_{n\ell} \sqrt[\leftroot{-2}\uproot{5}\alpha_{n\ell}]{\xi_{n\ell}} = \frac{\alpha_{n\ell}}{\alpha_{n\ell} + 1}\\
    \left\langle r_{n\ell}\right\rangle &=  \frac{1}{\alpha_{n\ell} + 1} = \frac{1}{\abs{I_{n\ell}} + \abs{O_{n\ell}} - N}.
    \end{aligned}
\end{equation}
The average reflectivity $\langle r_{n\ell} \rangle$ shown in Figure \ref{fig:checkerboardplots}(b) gives a simple interpretation for the sensitivity index shown in Figure \ref{fig:checkerboardplots}(a). The average reflectivity is equal to the inverse of the total number of inputs and outputs reachable by the MZI minus the number of ports on either side of the device, $N$. This is true regardless of whether $\alpha_{n\ell}$ is assigned for a triangular or rectangular mesh.

\begin{figure}[t]
    \centering
    \includegraphics[width=0.48\textwidth]{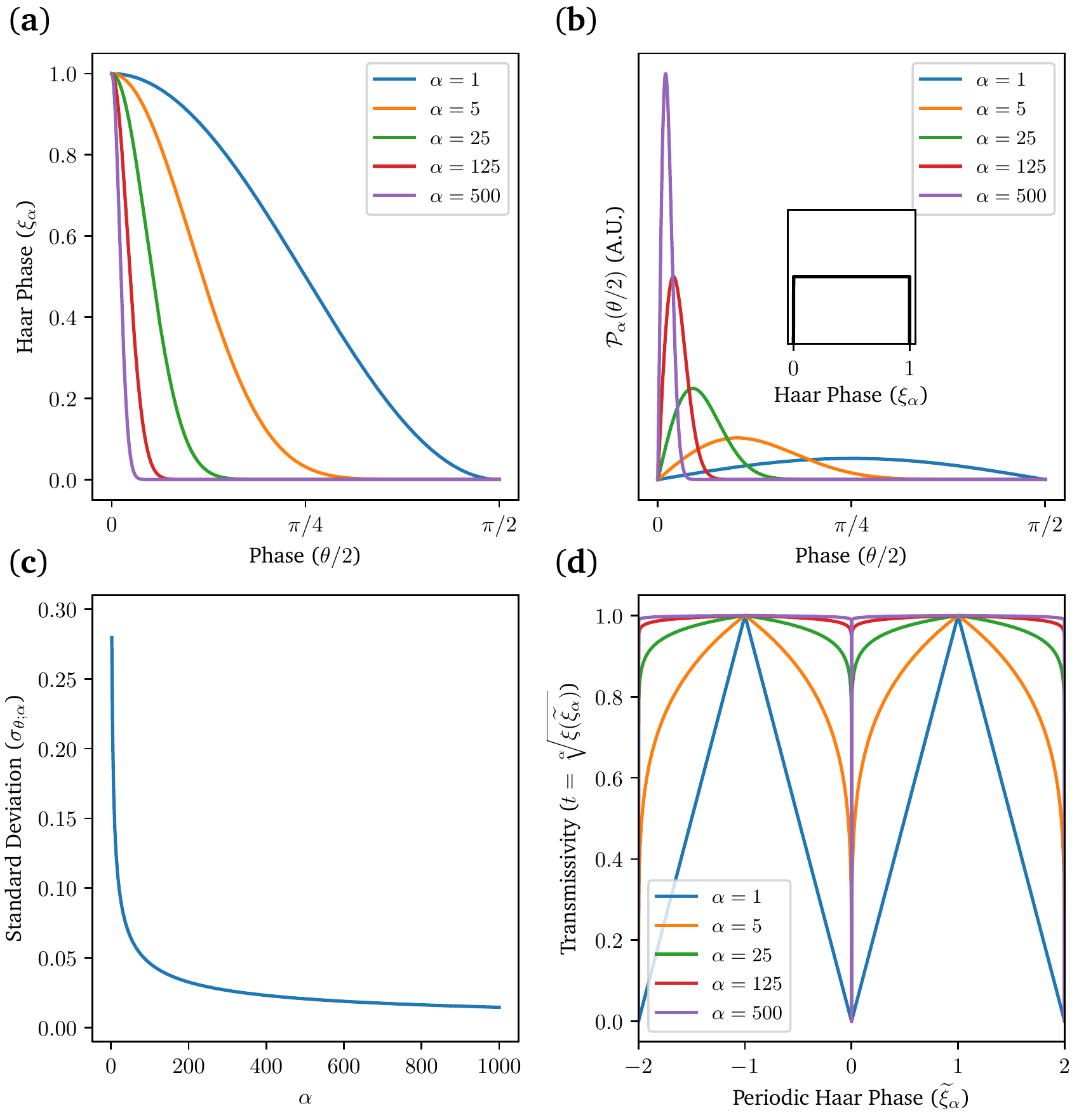}
    \caption{(a) Plot of the relationship between $\xi_\alpha$ and $\theta$. (b) We show that phase shift standard deviation $\sigma_{\theta; \alpha}$ decreases as $\alpha$ increases. (c) A plot of $\sigma_{\theta; \alpha}$ as $\alpha$ increases. (d) The transmissivity of an MZI component as a function of a periodic Haar phase has a power law relationship. The periodic Haar phase $\widetilde{\xi}_\alpha$ is mapped to the Haar phase by a function $\xi: \mathbb{R} \to [0, 1]$ as discussed in Appendix \ref{sec:periodicparameters}.}
    \label{fig:haarphase}
\end{figure}

To see what the Haar initialization has accomplished, we can compare the field propagation through the rectangular mesh from a single input when Haar initialized versus uniform initialized in Figure \ref{fig:checkerboardplots}(e). Physically, this corresponds to light in the mesh spreading out quickly from the input of the mesh and ``interacting" more near the boundaries of the mesh (inputs, outputs, top, and bottom), as compared to the center of the mesh which has high transmissivity. In contrast, when phases are randomly set, the light effectively follows a random walk through the mesh, resulting in the field propagation pattern shown in Figure \ref{fig:checkerboardplots}(f).

\subsection{Tolerance dependence on $N$}

While Haar initialization is based on how the average component reflectivity scales with $N$, optimization convergence and device robustness ultimately depend on how phase shift tolerances scale with $N$. The average sensitivity index in the mesh is $\langle\alpha_{n\ell}\rangle = (N+1)/3$. As shown in Figure \ref{fig:haarphase}(b, c), the standard deviation $\sigma_{\theta; \alpha}$ over the PDF $\mathcal{P}_\alpha$ decreases as $\alpha$ increases. Therefore, a phase shifter's allowable tolerance, which roughly correlates with $\sigma_{\theta; \alpha}$, decreases as the total number of input and output ports affected by that component increases. Since $\langle\alpha_{n\ell}\rangle$ increases linearly with $N$, the required tolerance gets more restrictive at large $N$, as shown in Figure \ref{fig:haarphase}(c). We find that the standard deviation is on the order $10^{-2}$ radians for most values of $N$ in the specified range. Thus, if thermal crosstalk is ignored \cite{Shen2017DeepCircuits}, it is possible to implement a known random unitary matrix in a photonic mesh assuming perfect operation. However, we concern ourselves with on-chip optimization given just input/output data, in which case the unitary matrix is unknown. In such a case, the decreasing tolerances do pose a challenge in converging to a global optimum as $N$ increases. We demonstrate this problem for $N = 128$ in Section \ref{sec:simulation}.

To account for the scalability problem in global optimization, one strategy may be to design a component in such a way that the mesh MZIs can be controlled by Haar phase voltages as in Figure \ref{fig:haarphase}(d) and Equation \ref{eqn:haarinit}. The transmissivity dependence on a periodic Haar phase (shown in Figure \ref{fig:haarphase}(d) and discussed in Appendix \ref{sec:periodicparameters}), is markedly different from the usual sinusoidal dependence on periodic $\theta_{n\ell}$. The MZIs near the boundary vary in transmissivity over a larger voltage region than the MZIs near the center, where only small voltages are needed get to full transmissivity. This results in an effectively small control tolerance near small voltages. This motivates the modifications to the mesh architecture which we discuss in the next section.

\section{Architecture Modifications}
\label{sec:architectures}

We propose two architecture modifications that can relax the transmissivity tolerances in the mesh discussed in Section \ref{sec:haarinit} and result in significant improvement in optimization.

\subsection{Redundant rectangular mesh (RRM)}

By adding extra tunable MZIs, it is possible to greatly accelerate the optimization of a rectangular mesh to an unknown unitary matrix. The addition of redundant tunable layers to a redundant rectangular mesh is depicted in green in Figure \ref{fig:architectures}(a). The authors in Ref. \cite{Burgwal2017UsingUnitaries} point out that using such ``underdetermined meshes" (number of inputs less than the number of tunable layers in the mesh) can overcome photonic errors and restore fidelity in unitary construction algorithms. Adding layers to the mesh increases the overall optical depth of the device, but embedding smaller meshes with extra beamsplitter layers in a rectangular mesh of an acceptable optical depth does not pose intrinsic waveguide loss-related problems.

\subsection{Permuting rectangular mesh (PRM)}

Another method to accelerate the optimization of a rectangular mesh is to shuffle outputs at regular intervals within the rectangular mesh. This shuffling relaxes component tolerances and uniformity of the number of paths for each input-output transition. We use this intuition to formally define a permuting rectangular mesh. For simplicity,\footnote{If $N$ is not a power of 2, then one might consider the following approximate design: $K = \ceil{\log_2 N}$. Define $b(K) = \sqrt[K]{N}$, and let each $P_k$ have $\ceil{b^k}$ layers.} assume $N = 2^{K}$ for some positive integer $K$. Define ``rectangular permutation" operations $P_k$ that allow inputs to interact with waveguides at most $2^k$ away for $k < K$. These rectangular permutation blocks can be implemented using a rectangular mesh composed of MZIs with fixed cross state phase shifts, as shown in Figure \ref{fig:architectures}(b), or using low-loss waveguide crossings.

We now add permutation matrices $P_1, P_2, \ldots P_{K-1}$ into the middle of the rectangular mesh as follows
\begin{equation}\label{eqn:permutingrectangularmesh}
\begin{aligned}
    \hat{U}_{\mathrm{PR}} &:= M_K \left(\prod_{k=1}^{K - 1} P_k M_k  \right) \\ 
    M_k &:= \prod_{\ell=(k-1) \ceil{\frac{N}{K}}}^{\min\left(k\ceil{\frac{N}{K}}, N\right)} \prod_{n \in \mathcal{S}_{\ell, N}} U_n(\theta_{n\ell}, \phi_{n\ell}),
\end{aligned}
\end{equation}
where $\ceil{x}$ represents the nearest integer larger than $x$.

There are two operations per block $k$: an $\ceil{\frac{N}{K}}$-layer rectangular mesh which we abbreviate as $M_k$, and the rectangular permutation mesh $P_k$ where block index $k \in [1 \cdots K - 1]$. This is labelled in Figure \ref{fig:architectures}(b).

\begin{figure}[t]
    \centering
    \includegraphics[width=0.48\textwidth, trim={0 3.5cm 0 0},clip]{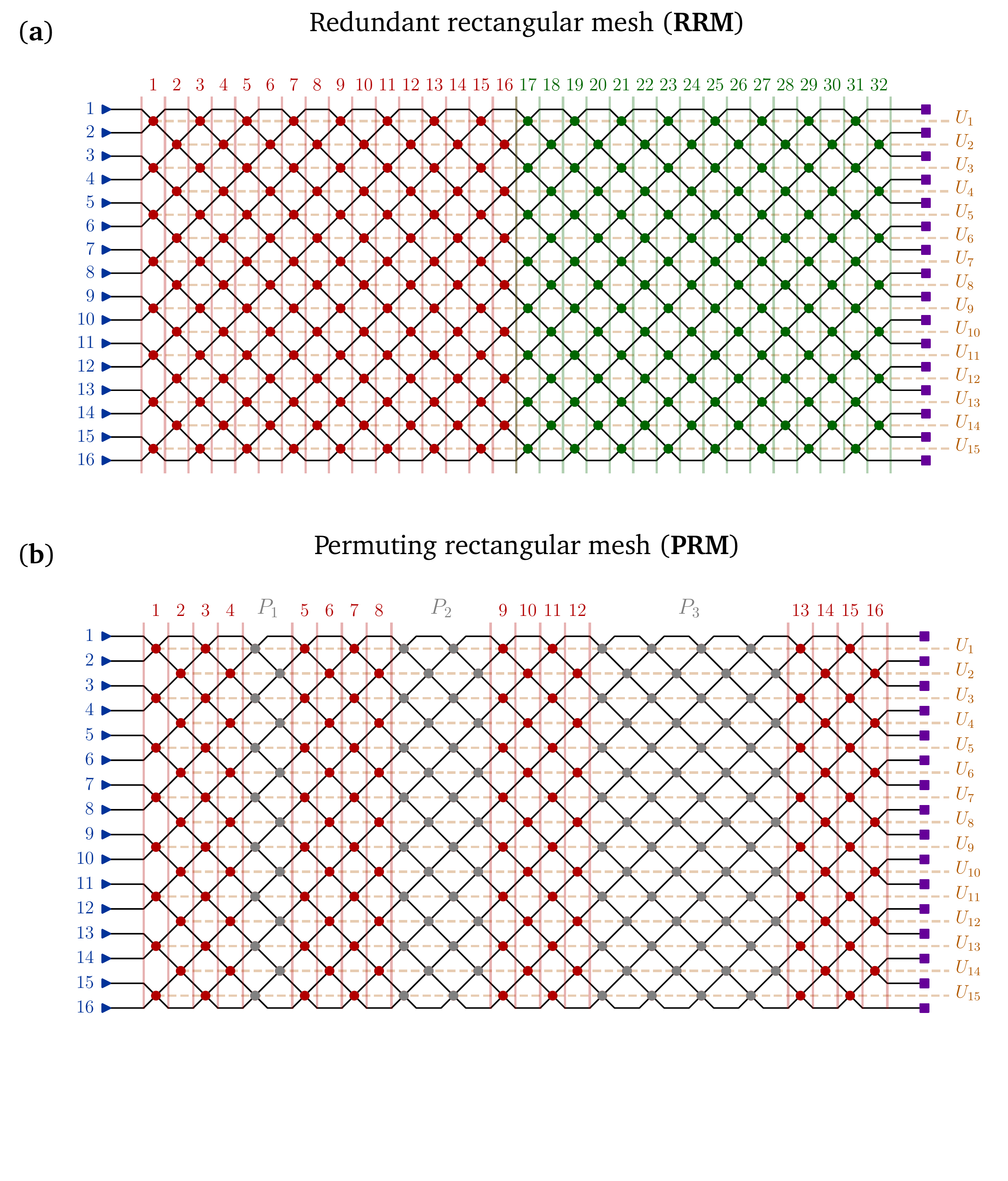}
    \caption{(a) A $16 \times 16$ rectangular mesh (red). Extra tunable layers (green) may be added to significantly reduce convergence time. (b) A $16$-input, $30$-layer permuting rectangular mesh. The rectangular permutation layer is implemented using either waveguide crossings or cross state MZIs (gray).}
    \label{fig:architectures}
\end{figure}

\section{Simulations \label{sec:simulation}}

Now that we have discussed the mesh modifications and Haar initialization, we simulate global optimization to show how our framework can improve convergence performance by up to five orders of magnitude, even in the presence of fabrication error.

\subsection{Mesh initialization}
\begin{figure}[t]
    \centering
    \includegraphics[width=0.48\textwidth]{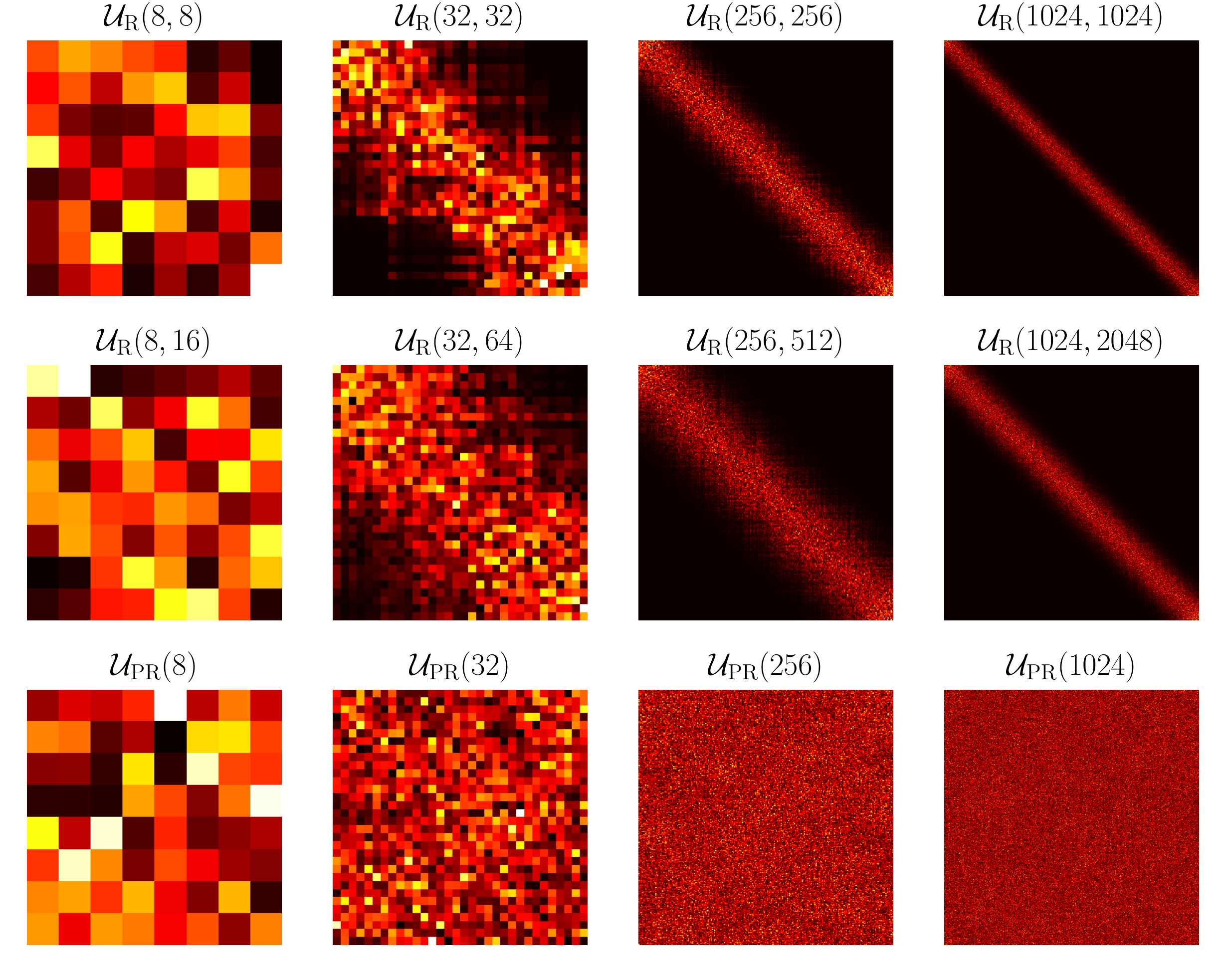}
    \caption{Elementwise absolute values of unitary matrices resulting from rectangular ($U \sim \mathcal{U}_{\mathrm{R}}$) and permuting rectangular ($U \sim \mathcal{U}_{\mathrm{PR}}$) meshes where meshes are initialized with uniform-random phases.}
    \label{fig:bandedmatrices}
\end{figure}

We begin by discussing the importance of initializing the mesh to respect the cross state bias and sensitivity of each component for Haar random unitary matrices discussed in Section \ref{sec:haarinit}. Uniform random phase initialization is problematic because it is agnostic of the sensitivity and average behavior of each component. We define this distribution of matrices as $\mathcal{U}_\mathrm{R}(N, L)$ for a rectangular mesh for $N$ inputs and $L$ layers. As shown previously in Figure \ref{fig:checkerboardplots}(f), any given input follows a random walk-like propagation if phases are initialized uniform-randomly, so there will only be non-zero matrix elements within a ``bandsize" about the diagonal. This bandsize decreases as circuit size $N$ increases as shown in Figure \ref{fig:bandedmatrices}.

We compare the bandsizes of banded unitary matrices in simulations qualitatively as we do in Figure \ref{fig:bandedmatrices} or quantitatively as we do in Appendix \ref{sec:bandsize}. We randomly generate $U \sim \mathcal{U}_\mathrm{R}(N, N)$, $U \sim \mathcal{U}_{\mathrm{PR}}(N)$ (permuting rectangular mesh with $N$ tunable layers), and $U \sim \mathcal{U}_{\mathrm{R}}(N, N + \delta N)$ (redundant rectangular mesh with $\delta N$ extra tunable layers). Figure \ref{fig:bandedmatrices} shows a significant reduction in bandsize as $N$ grows larger for rectangular meshes. This phenomenon is not observed with permuting rectangular meshes which generally have the same bandsize as Haar random matrices (independent of $N$) as shown in in Figure \ref{fig:bandedmatrices} and Appendix \ref{sec:bandsize}. This correlates with permuting rectangular meshes having faster optimization and less dependence on initialization.

Instead of initializing the mesh using uniform random phases, we use Haar initialization as in Equation \ref{eqn:haarinit} to avoid starting with a banded unitary configuration. This initialization, which we recommend for any photonic mesh-based neural network application, dramatically improves convergence because it primes the optimization with the right average behavior for each component. We find in our simulations that as long as the initialization is calibrated towards higher transmissivity ($\theta_{n\ell}$ near $0$), larger mesh networks can also have reasonable convergence times similar to when the phases are Haar-initialized. 

The proper initialization of permuting rectangular meshes is less clear because the tolerances and average behavior of each component have not yet been modeled. Our proposal is to initialize each tunable block $M_k$ as an independent mesh using the same definition for $\alpha_{n\ell}$, except replacing $N$ with the number of layers in $M_k$, $\lceil N / K\rceil$. This is what we use as the Haar initialization equivalent in the permuting rectangular mesh case, although it is possible there may be better initialization strategies for the nonlocal mesh structure.

\subsection{Optimization problem and synthetic data}

After initializing the photonic mesh, we proceed to optimize the mean square error cost function for an unknown Haar random unitary $U$:
\begin{equation} \label{eqn:optimization}
\begin{aligned}
& \underset{\theta_{n\ell}, \phi_{n\ell}, \gamma_n}{\text{minimize}}
& & \frac{1}{2N}\left\lVert\hat U(\theta_{n\ell}, \phi_{n\ell}, \gamma_n) - U\right\rVert_F^2,
\end{aligned}
\end{equation}
where the estimated unitary matrix function $\hat U$ maps $N^2$ phase shift parameters $\theta_{n\ell}, \phi_{n\ell}, \gamma_n$ to $\mathrm{U}(N)$ via Equations \ref{eqn:rectangularmesh} or \ref{eqn:permutingrectangularmesh}, and $\|\cdot\|_F$ denotes the Frobenius norm. Since trigonometric functions parameterizing $\hat U$ are non-convex, we know that Equation \ref{eqn:optimization} is a non-convex problem. The non-convexity of Equation \ref{eqn:optimization} suggests learning a single unitary transformation in a deep neural network might have significant dependence on initialization.

To train the network, we generate random unit-norm complex input vectors of size $N$ and generate corresponding labels by multiplying them by the target matrix $U$. We use a training batch size of $2N$. The synthetic training data of unit-norm complex vectors is therefore represented by $X \in \mathbb{C}^{N \times 2N}$. The minibatch training cost function is similar to the test cost function, $\mathcal{L}_\mathrm{train} = \|\hat U X - U X\|_F^2$. The test set is the identity matrix $I$ of size $N \times N$. The test cost function, in accordance with the training cost function definition, thus matches Equation \ref{eqn:optimization}.

\subsection{Training algorithm}

We simulate the global optimization of a unitary mesh using automatic differentiation in \texttt{tensorflow}, which can be physically realized using the \textit{in situ} backpropagation procedure in Ref. \cite{Hughes2018TrainingMeasurement}. This optical backpropagation procedure physically measures $\partial\mathcal{L}_\mathrm{train}/\partial\theta_{n\ell}$ using interferometric techniques, which can be extended to any of the architectures we discuss in this paper.

The on-chip backpropagation approach is also likely faster for gradient computation than other training approaches such as the finite difference method mentioned in past on-chip training proposals \cite{Shen2017DeepCircuits}. We find empirically that the Adam update rule (a popular first-order adaptive update rule \cite{Kingma2015Adam:Optimization}) outperforms standard stochastic gradient descent for the training of unitary networks. If gradient measurements for the phase shifts are stored during training, adaptive update rules can be applied using successive gradient measurements for each tunable component in the mesh. Such a procedure requires minimal computation (i.e., locally storing the previous gradient step) and can act as a physical test of the simulations we will now discuss. Furthermore, we avoid quasi-Newton optimization methods such as L-BFGS used in Ref. \cite{Burgwal2017UsingUnitaries} that cannot be implemented physically as straightforwardly as first-order methods.

The models were trained using our open source simulation framework \href{https://github.com/solgaardlab/neurophox}{\texttt{neurophox}} \footnote{See \href{https://github.com/solgaardlab/neurophox}{https://github.com/solgaardlab/neurophox}.} using a more general version of the vertical layer definition proposed in Refs. \cite{Jing2017TunableRNNs, Hughes2018TrainingMeasurement}. The models were programmed in \texttt{tensorflow} \cite{Abadi2016TensorFlow:Learning} and run on an NVIDIA GeForce GTX1080 GPU to improve optimization performance.

\subsection{Results}

We now compare training results for rectangular, redundant rectangular, and permuting rectangular meshes given $N = 128$. In our comparison of permuting rectangular meshes and rectangular meshes, we analyze performance when beamsplitter errors are distributed throughout the mesh as either $\epsilon = 0$ or $\epsilon \sim \mathcal{N}(0, 0.01)$ and when the $\theta_{n\ell}$ are randomly or Haar-initialized (according to the PDF in Equation \ref{eqn:haarpdf}). We also analyze optimization perforamnces of redundant rectangular meshes where we vary the number of vertical MZI layers.

From our results, we report five key findings:

\begin{enumerate}
    \item Optimization of $N = 128$ rectangular meshes results in significant off-diagonal errors due to bias towards the banded matrix space of $\mathcal{U}_R(128)$, as shown in Figure \ref{fig:unitaryopt}.
    \item Rectangular meshes converge faster when Haar-initialized than when uniformly random initialized, as in Figure \ref{fig:unitaryopt}, in which case the estimated matrix converges towards a banded configuration shown in Appendix \ref{sec:unitaryoptcomparison}.
    \item Permuting rectangular meshes converge faster than rectangular meshes despite having the same number of total parameters as shown in Figure \ref{fig:unitaryopt}.
    \item Redundant rectangular meshes, due to increase in the number of parameters, have up to five orders of magnitude better convergence when the number of vertical layers are doubled compared to rectangular and permuting rectangular meshes, as shown in Figure \ref{fig:rrm}.
    \item Beamsplitter imperfections slightly reduce the overall optimization performance of permuting and redundant rectangular meshes, but reduce the performance of the rectangular mesh significantly. (See Figure \ref{fig:unitaryopt} and Appendix \ref{sec:rrmphotonicerrors}.)
\end{enumerate}

\begin{figure}[h]
    \centering
    \includegraphics[width=0.48\textwidth]{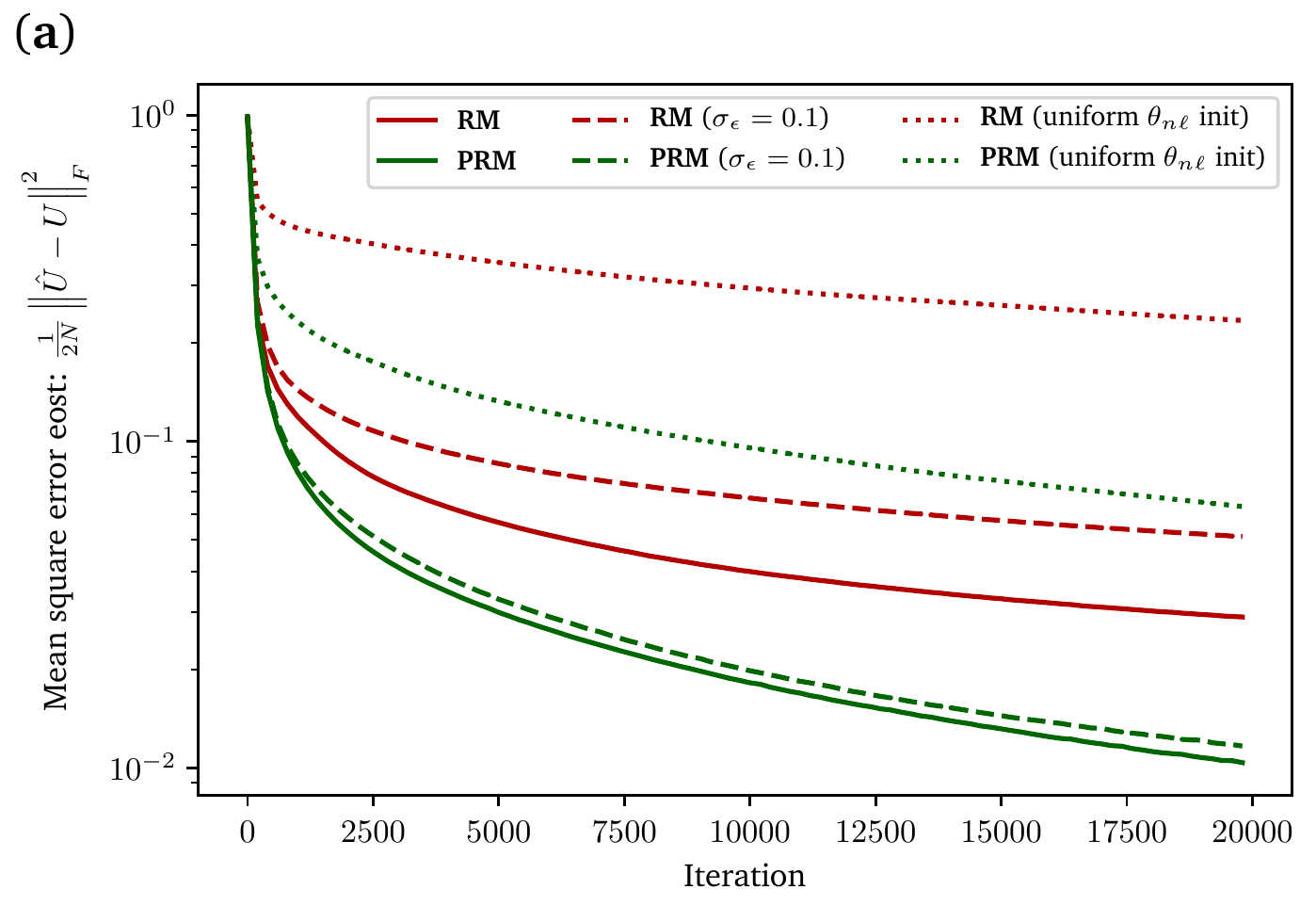}\\
    \includegraphics[width=0.48\textwidth]{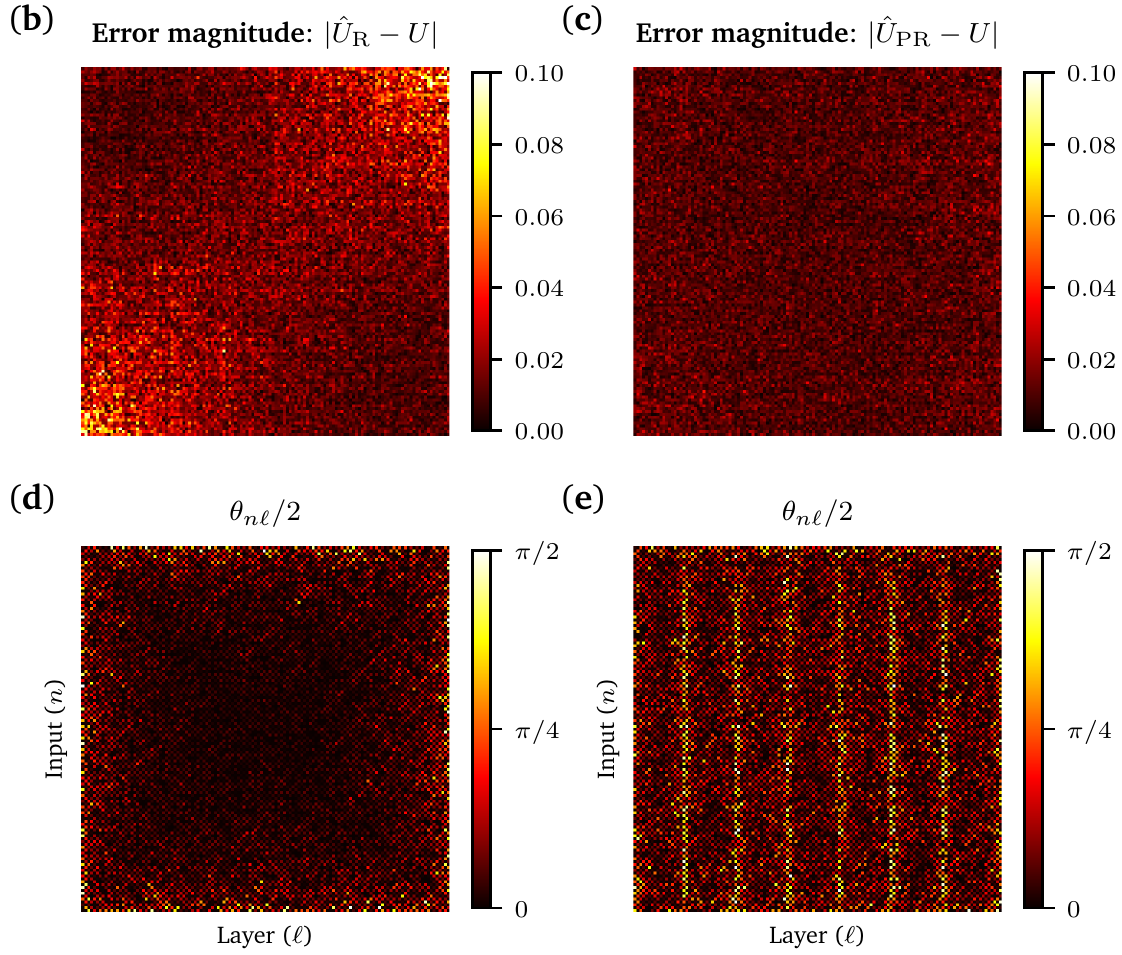}
    \caption{We implement six different optimizations for $N = 128$ where we vary the choice of permuting rectangular mesh (PRM) or rectangular mesh (RM); the initialization (random $\theta_{n\ell}$ or Haar-initialized $\theta_{n\ell}$); and photonic transmissivity error displacements ($\epsilon = 0$ or $\epsilon \sim \mathcal{N}(0, 0.01)$, where $\sigma_\epsilon^2 = 0.01$ is the variance of the beamsplitter errors). Conditions: $20000$ iterations, Adam update, learning rate of $0.0025$, batch size of 256, simulated in \texttt{tensorflow}. (a) Comparison of optimization performance (defaults are Haar initialization and $\epsilon_{n\ell} = 0$ unless otherwise indicated). Optimized error magnitude spatial map for (b) rectangular mesh shows higher off-diagonal errors and than (c) permuting rectangular. The optimized $\theta_{n\ell}$ phase shifts (see Appendix \ref{sec:periodicparameters}) for (d) rectangular meshes are close to zero (cross state) near the center as opposed to (e) permuting rectangular meshes which have a striped pattern (likely due to initialization). NOTE: by $\abs{\cdot}$, we refer to the elementwise norm.}
    \label{fig:unitaryopt}
\end{figure}

\begin{figure}[ht]
    \centering
    \includegraphics[width=0.48\textwidth]{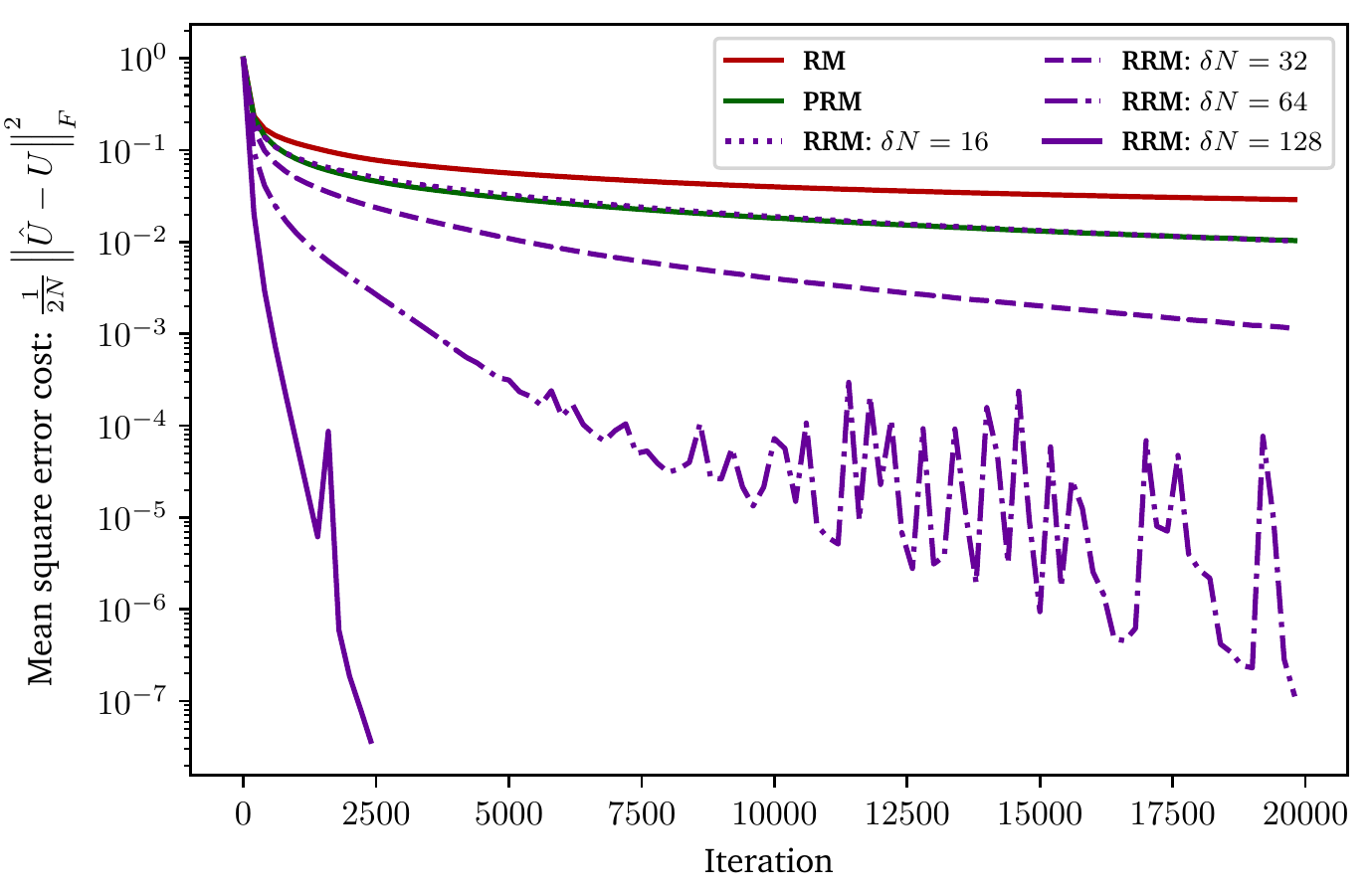}
    \caption{A comparison of test error in \texttt{tensorflow} for $N = 128$ between rectangular (RM), permuting rectangular (PRM), and redundant rectangular (RRM) meshes for: $20000$ iterations, Adam update, learning rate of $0.0025$, batch size of 256. Ideal = Haar random initialized $\theta_{n\ell}$ with $\epsilon = 0$. $\delta N$ is the additional layers added in the redundant mesh. We stopped the $\delta N = 128$ run within 4000 iterations when it reached convergence within machine precision. Redundant meshes with 32 additional layers converge better than permuting rectangular meshes, and with just 16 additional layers, we get almost identical performance.}
    \label{fig:rrm}
\end{figure}

The singular value decomposition (SVD) architecture discussed in Refs. \cite{Miller2013Self-configuringInvited, Shen2017DeepCircuits} consists of optical lossy components flanked on both sides by rectangular meshes and are capable of implementing any linear operation with reasonable device input power. Note that with some modifications (e.g. treating loss and gain elements like nonlinearities in the procedure of Ref. \cite{Hughes2018TrainingMeasurement}), SVD architectures can also be trained physically using \textit{in situ} backpropagation. We simulate the gradient-based optimization of SVD architectures using automatic differentiation in Appendix \ref{sec:svdcomparison}.

\section{Discussion}

\subsection{Haar initialization}
For global optimization and robustness of universal photonic meshes, it is important to consider the required biases and sensitivities for each mesh component. Implementing any Haar random matrix requires that each component independently follows an average reflectivity within some tolerance. This requirement becomes more restrictive with the number of input and output ports accessible by each mesh component. For the rectangular mesh, this means the center mesh components are close to cross state and the most sensitive.

In a Haar-initialized mesh, as shown in Figure \ref{fig:checkerboardplots}, the light injected into a single input port spreads out to all waveguides in the device uniformly regardless of $N$. This is a preferable initialization for global optimization because Haar random matrices require this behavior. In contrast, when randomly initializing phases, the light only spreads out over a limited band of outputs. This band gets relatively small compared to the mesh gets larger as shown in Figure \ref{fig:bandlimits}. 

The average reflectivities given by Haar initialization may be useful for inverse design approaches \cite{Piggott2017Fabrication-constrainedDesign} for compact tunable or passive multiport interferometers. The component tolerances may inform how robust phase shifters need to be given error sources such as thermal crosstalk \cite{Shen2017DeepCircuits}. The thermal crosstalk might make it difficult to achieve required tolerances for devices interfering up to $N = 1000$ modes that generally have phase shift tolerances just above $10^{-2}$ radians.\footnote{Ref. \cite{Shen2017DeepCircuits} propose a standard deviation of $\sim 10^{-3}$ might be possible with further circuit characterization, which might be scalable based on Figure \ref{fig:haarphase}(c).}

In our simulations in Section \ref{sec:simulation}, we assume that the control parameter for photonic meshes is linearly related to the phase shift. However, in many current phase shifter implementations, such as thermal phase shifters \cite{Shen2017DeepCircuits}, the phase is a nonlinear function of the control parameter (i.e., the voltage) and has minimum and maximum values, unlike the unbounded phase used in our optimization. In addition, like the Haar phase in our theory, the voltage acts as the CDF for transmissivities in the physical device, up to a normalization factor. Particular attention needs to be given to phase uncertainty as a function of voltage, since the Haar random distribution of internal MZI phases has small variance for large $N$, as we show in Figure \ref{fig:haarphase}(c). As mentioned in Section \ref{sec:haarinit}, the ideal transmissivity-voltage dependence with this consideration would be identical to the transmissivity vs Haar phase dependence in Figure \ref{fig:haarphase}(d).

\subsection{Applications of mesh optimization}

Meshes can be tuned using either self-configuration \cite{Reck1994ExperimentalOperator, Miller2013Self-configuringInvited} or global optimizations (gradient-based \cite{Hughes2018TrainingMeasurement} or derivative-free \cite{Tang2018ReconfigurableConversion}). The algorithmic optimizations proposed in Refs. \cite{Reck1994ExperimentalOperator, Miller2013Self-configuringInvited} assume that each component in the mesh can cover the entire split ratio range, which is not the case in presence of 50:50 beamsplitter errors. This ultimately leads to lower fidelity in the implemented unitary operation, which can be avoided using a double-MZI architecture \cite{Miller2015PerfectComponents, Wilkes201660dBInterferometer} or a vertical layer-wise progressive algorithm \cite{Miller2017SettingMethod}. We explore a third alternative to overcome photonic errors; gradient-based global optimization is model-free and, unlike algorithmic approaches, can efficiently tune photonic neural networks \cite{Hughes2018TrainingMeasurement}. This model-free property makes gradient-based optimization robust to fabrication error; we show in Figure \ref{fig:unitaryopt}(a) that meshes with split ratio error variances of up to $\sigma_\epsilon = 0.1$ can be optimized nearly as well as a perfect mesh, particularly for permuting rectangular meshes.

In the regime of globally optimized meshes, we propose two strategies to modify the rectangular architecture: adding waveguide permutation layers and adding extra tunable vertical MZI layers. Both approaches relax the cross state requirements on the MZIs and accelerate the mesh optimization process. Nonlocal interference works by allowing inputs that are far away physically in the mesh to interact. These approaches are inspired by several recent proposals in machine learning and coherent photonics to design more error tolerant and efficient meshes, many of which use single layers of MZIs and nonlocal waveguide interactions \cite{Genz1998MethodsMatrices, Mathieu2014FastMatrices, Flamini2017BenchmarkingProcessing, Jing2017TunableRNNs}; such designs can also be considered to be in the same class of permuting architectures as our proposed permuting rectangular mesh. Adding extra tunable vertical layers, as proposed in Ref. \cite{Burgwal2017UsingUnitaries}, simply adds more tunable paths for the light to achieve a desired output. As shown in Figure \ref{fig:unitaryopt}, we achieve up to five orders of magnitude improvement in convergence at the expense of doubling the mesh size and parameter space.

Like permuting rectangular meshes, multi-plane light conversion successfully applies the non-local interference idea for efficient spatial mode multiplexing \cite{Labroille2014EfficientConversion, Labroille2016ModeConversion}. In this protocol, alternating layers of transverse phase profiles and optical Fourier transforms (analogous to what our rectangular permutations accomplish) are applied to reshape input modes of light \cite{Labroille2014EfficientConversion, Labroille2016ModeConversion}. A similar concept is used in unitary spatial mode manipulation, where stochastic optimization of deformable mirror settings allow for efficient mode conversion \cite{Morizur2010ProgrammableManipulation}. Thus, the idea of efficient unitary learning via a Fourier-inspired permuting approach has precedent in contexts outside of photonic MZI meshes.

An on-chip optimization for multi-plane light conversion has been accomplished experimentally in the past using simulated annealing \cite{Tang2018ReconfigurableConversion}. The success of simulated annealing in experimentally training small unitary photonic devices \cite{Tang2018ReconfigurableConversion} (rather than gradient descent as is used in this work) suggests there are other algorithms aside from gradient descent that may effectively enable on-chip training. 

We propose that similar simulated annealing approaches might be made more efficient by sampling Haar phases from uniform distributions and flashing updates onto the device. Similar derivative-free optimizations may also be useful for quantum machine learning \cite{Schuld2019QuantumSpaces, Schuld2018Circuit-centricClassifiers, Killoran2018Continuous-variableNetworks}. Whether such approaches can compete with backpropagation for classical applications remains to be investigated. For experimental on-chip tuning, simulated annealing has the attractive property of only requiring output detectors. For practical machine learning applications, however, there is currently more literature for backpropagation-based optimization. Furthermore, gradient-based approaches allow for continuous control of phase shifters during the optimization.

Our \texttt{tensorflow} simulations may be useful in the design of optical recurrent neural networks (RNNs) that use unitary operators parameterized by photonic meshes. Such ``unitary RNNs" (URNNs) have already been simulated on conventional computers and show some promise in synthetic long-term memory tasks \cite{Jing2017TunableRNNs, Dangovski2017RotationalMemory}. Unitary RNNs are physically implementable using a single mesh with optical nonlinearities and recurrent optoelectronic feedback, suggesting that the architecture discussed in this work is a scalable, energy-efficient option for machine learning applications. It is possible that some tunable features such as the ``bandedness" of unitaries implemented by rectangular MZI meshes can be useful (e.g. as an attention mechanism in sequence data) for certain deep learning tasks that use URNNs.

\section{Conclusion}

The scalability of gradient-based optimization of Haar random unitary matrices on universal photonic meshes is limited by small reflectivities and MZI phase shifter sensitivities arising from the constraint of locally interacting components. As shown in Section \ref{sec:haarinit}, the required average reflectivity and sensitivity for each MZI is inversely related to the total number of inputs and outputs affected by the MZI. If the tolerance requirements are not met by the physical components, optimization algorithms will have difficulty converging to a target unitary operator. As shown in Section \ref{sec:simulation} for the case of $N = 128$, convergence via \textit{in situ} backpropagation is generally not achieved if phase shifters are initialized randomly. However, Haar initialization can sufficiently bias the optimization for convergence to a desired random unitary matrix, even in the presence of significant simulated beamsplitter fabrication errors.

In Section \ref{sec:architectures}, we propose adding extra tunable beamsplitters or mesh nonlocalities to accelerate mesh optimization. Naive (uniform random) initialization on a standard photonic mesh has difficulty learning random unitary matrices via gradient descent. By introducing non-localities in the mesh, we can improve optimization performance without the need for extra parameters. A Haar-initialized redundant architecture can achieve five orders of magnitude less mean square error for a Haar random unitary matrix and decrease optimization time to such a matrix by at least two orders of magnitude, as shown in Figure \ref{fig:rrm}. Our findings suggest that architecture choice and initialization of photonic mesh components may prove important for increasing the scalability and stability of reconfigurable universal photonic devices and their many classical and quantum applications \cite{Annoni2017UnscramblingModes, Shen2017DeepCircuits, Spring2013BosonChip, Killoran2018Continuous-variableNetworks, Schuld2018Circuit-centricClassifiers, Schuld2019QuantumSpaces, Arrazola2019MachineComputers, Miller2013Self-configuringInvited, Miller2013Self-aligningCoupler}.

\section*{Acknowledgements}
This work was funded by the Air Force Office of Scientific Research, specifically for the Center for Energy‐Efficient 3D Neuromorphic Nanocomputing (CEE3N$^2$), Grant No. FA9550-181-1-0186. We would like to thank Tyler Hughes, Momchil Minkov, Nate Abebe, Dylan Black, and Ian Williamson for illuminating discussions.

\bibliography{matrixoptimization}

\appendix

\section{Software}
To reproduce the results of this paper, the reader can be directed to \href{https://github.com/solgaardlab/neurophox}{\texttt{neurophox}}, an open-source Python package that implements the optimizations and simulations of this paper in \texttt{numpy} and \texttt{tensorflow}. The exact code used to generate the results is provided in the \href{https://github.com/solgaardlab/neurophox-notebooks}{\texttt{neurophox-notebooks}} repository.

\section{Derivation of beamsplitter errors} \label{sec:photonicerrorsproof}
Unitary matrices generated by lossless MZIs are prone to errors in beamsplitter fabrication. We introduce the error $\epsilon$ to our expression derived in Equation \ref{eqn:mzidefinition}, which is twice the displacement in beamsplitter split ratio from $50:50$. Beamsplitter gates with error $\epsilon$ are defined as $B_\epsilon = \begin{bmatrix}\rho & i\tau \\ i\tau & \rho\end{bmatrix}$ where $\rho = \sqrt{\frac{1 + \epsilon}{2}}, \tau = \sqrt{\frac{1 - \epsilon}{2}}$ are transmissivity and reflectivity amplitudes that result in slight variations from a $50:50$ beamsplitter. We use this error definition since it is a measurable quantity in the chip; in fact, there are strategies to minimize $\epsilon$ directly \cite{Miller2015PerfectComponents}. The unitary matrix that we implement in presence of beamsplitter errors becomes
\begin{equation}
\begin{aligned}
U_\epsilon &:= R_\phi B_{\epsilon_2} R_\theta B_{\epsilon_1} \\
t_\epsilon &:= |U_{\epsilon, 12}|^2 = |U_{\epsilon, 21}|^2 \\
r_\epsilon &:= |U_{\epsilon, 11}|^2 = |U_{\epsilon, 22}|^2.
\end{aligned}
\end{equation}
If $\epsilon_1 = \epsilon_2 = \epsilon$, which is a reasonable practical assumption for nearby fabricated structures, then solving for $t_\epsilon$ in terms of $t$:
\begin{equation}
\begin{aligned}
t_\epsilon &= 4 |\rho|^2|\tau|^2 t\\
&= 4 t \left(\frac{1}{2} + \frac{\epsilon}{2}\right) \left(\frac{1}{2} - \frac{\epsilon}{2}\right) \\
&= t(1 - \epsilon^2).
\end{aligned}
\end{equation}
Similarly, we can solve for $r_\epsilon$:
\begin{equation}
\begin{aligned}
r_\epsilon &= 1 - t_\epsilon = r + t \cdot \epsilon^2.
\end{aligned}
\end{equation}

As we have discussed in this paper (and as we later show in Figure \ref{fig:resultcomparison}), photonic errors $\epsilon$ (standard deviation of 0.1) can affect the optimized phase shifts for unitary matrices. The above constraints on $r_\epsilon$ and $t_\epsilon$ suggest that limited transmissivity is likely in the presence of fabrication errors, which can inhibit progressive setup of unitary meshes \cite{Miller2015PerfectComponents, Burgwal2017UsingUnitaries}. However, we will later show through \texttt{tensorflow} simulation that \textit{in situ} backpropagation updates can to some extent address this issue using a more sophisticated experimental protocol involving phase conjugation and interferometric measurements \cite{Hughes2018TrainingMeasurement}.

\section{Haar measure} \label{sec:haarmeasure}

\begin{figure}[h]
    \centering
    \includegraphics[width=0.48\textwidth]{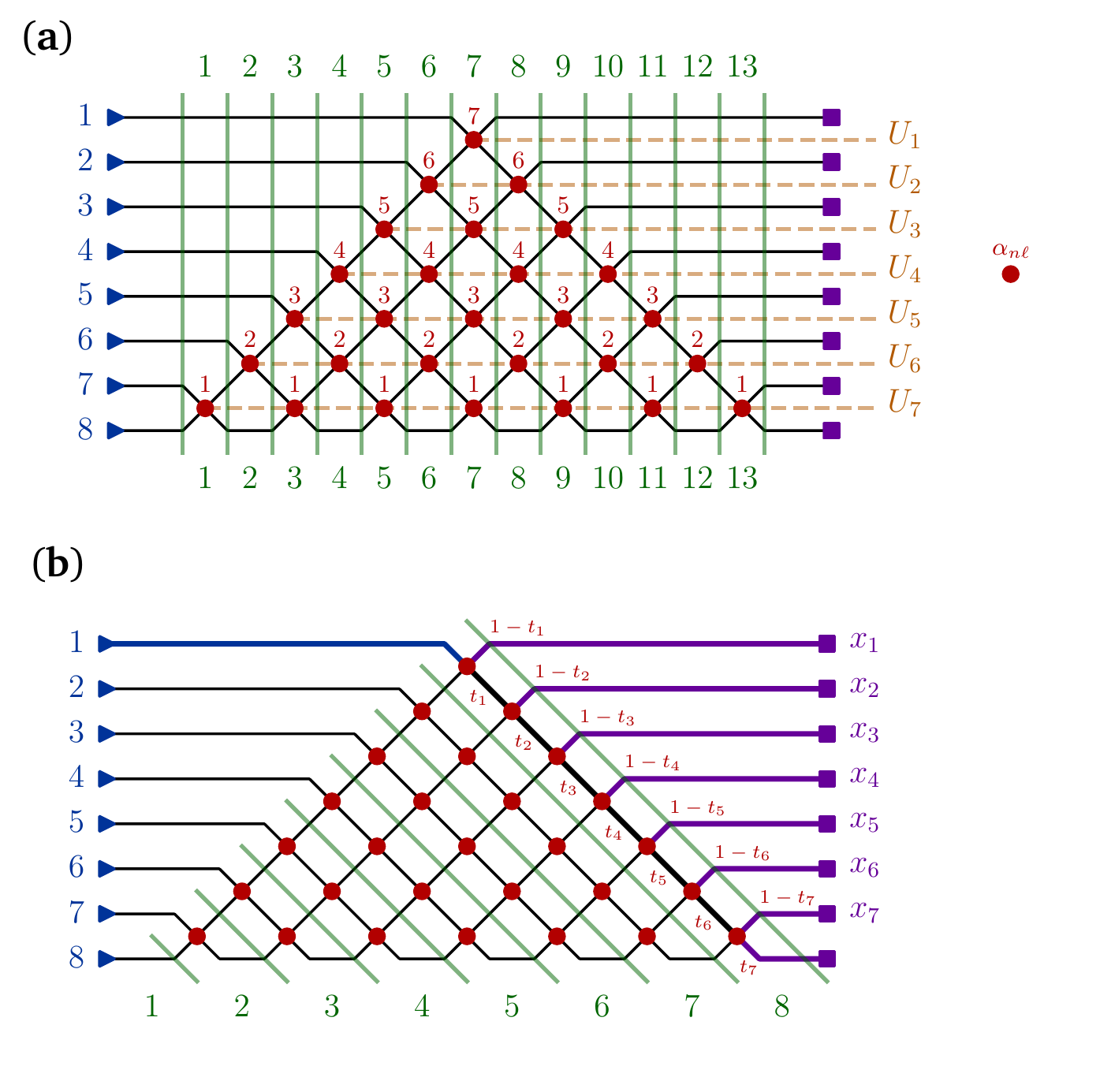}
    \caption{Triangular mesh for $N = 8$ using (a) $2N - 3$ vertical layers $\ell$ showing the sensitivity index $\alpha_{n\ell}$ and (b) $N$ diagonal layers $m$ showing the transmissivity basis ($t_n$ in red) and the measurement basis ($x_n$ in purple).}
    \label{fig:triangularmesh}
\end{figure}

In this section, we outline a proof for the Haar measure of a unitary matrix in terms of the physical parameters of a photonic mesh to supplement our discussion of Haar phase and the proof in Ref. \cite{Russell2017DirectMatrices}. The Haar measure for $\mathrm{U}(N)$ can be defined in two physical basis representations: the measurement basis represents measurements after each MZI and the transmissivity basis represents the transmissivity of each MZI.

To make our explanation simpler, we will adopt the orthogonalization protocol used by Ref. \cite{Reck1994ExperimentalOperator}. In this representation, we define the triangular mesh $U_\mathrm{T}$ as
\begin{equation}\label{eqn:reck}
    \begin{aligned}
        U_{\mathrm{T}} &= \prod_{m = 0}^{N - 1} U^{(N - m)} \\
        U^{(m)} &= \prod_{n=1}^{m-1} U_{N - n}(\theta^{(m)}_{N - n}, \phi^{(m)}_{N - n}) \cdot D_{m}(\gamma_{N - m + 1}),
    \end{aligned}
\end{equation}
where $D_{m}$ is a diagonal matrix representing a single mode phase shift at index $N - m + 1$.

The $N$ operators $U^{(m)}$ represent the diagonal layers of the triangular mesh and their role is to project inputs from Hilbert space dimension from $m$ to $m - 1$ recursively until we reach a single mode phase shift in $U^{(1)} = D_1(\gamma_N)$. Our proof moves the same direction as Reck's orthogonalization procedure; starting from $m = N$, we progressively find the for each $U^{(m)}$ in decreasing order. For each layer $m$, there are $2m - 1$ complex hyperspherical coordinates ($m - 1$ ``amplitude" coordinates and $m$ ``phase" coordinates). The first column vector of $U$ can be recovered by shining light (using a unit power $P = 1$) through the top port of the layer (given by $n = N - m + 1$) and measuring the output fields in the triangular mesh generated by $U^{(m)}$, as shown in Figure \ref{fig:triangularmesh}(b). As mentioned in Refs. \cite{Reck1994ExperimentalOperator, Miller2013Self-configuringInvited}, progressive optimization moves in the opposite direction; the desired output fields are shined back into the device and the transmissivities $t^{(m)}_n$ and phases $\phi^{(m)}_n$ for each layer $m$ (moving from $N$ to $1$) can be progressively tuned until all the power lies in the top input port for that layer.

The measurement basis is an unbiased Haar measure (as shown in Ref. \cite{Russell2017DirectMatrices} using Gaussian random vectors) and can be physically represented by the power $x_n$ measured at waveguides $n \leq m - 1$ due to shining light through the top input port for that layer. Unlike the proof in Ref. \cite{Russell2017DirectMatrices}, we choose our constraint that the input power $P = 1$ rather than $P \in \mathbb{R}^+$, which introduces a normalization prefactor in our Haar measure by integration over all possible $P$.\footnote{This prefactor is exactly $\int_{0}^{\infty} dP e^{-P}P^{m - 1} = (m - 1)!$.} This allows us to ignore the power in the final output port $x_N$ because energy conservation ensures we have the constraint $x_N = 1 - \sum_{n=1}^{N - 1} x_n$. Therefore, our simplified Cartesian basis for each $m$ is (ignoring the normalization prefactor):
\begin{equation}
    \begin{aligned}
    \mathrm{d}\mu(U^{(m)}) &\propto \mathrm{d}\gamma_{N - m} \prod_{n=1}^{m-1} \mathrm{d}x_n \prod_{n=1}^{m}\mathrm{d}\phi_{n}.
    \end{aligned}
\end{equation}

Now we represent the Cartesian power quantities $x_n$ explicitly in terms of the component transmissivities, which we have defined already to be $t_{n} := \cos^2(\theta_{n} / 2)$. Using the same convention as hyperspherical coordinates, we get the following recursive relation for $x_n$ as shown diagrammatically by following the path of light from the top input port in Figure \ref{fig:triangularmesh}(b):
\begin{equation} \label{eqn:carttocomp}
    \begin{aligned}
    x_n &= (1 - t_n) \prod_{k =1}^{n-1} t_{k}.
    \end{aligned}
\end{equation}

Intuitively, Equation \ref{eqn:carttocomp} implies that the power $x_n$ measured at port $n$ is given by light that is transmitted by the first $n - 1$ components along the path of light and then reflected by the $n$th component. In other words, $x_n$ follows a geometric distribution.

We can use Equation \ref{eqn:carttocomp} to find the Jacobian $\mathcal{J} \in \mathbb{R}^{N - 1 \times N - 1}$ relating the $x_n$ and the $t_n$. We find that we have a lower triangular matrix $\mathcal{J}$ with diagonal elements for $n \leq N - 1$
\begin{equation} \label{eqn:jacobian}
    \begin{aligned}
    \mathcal{J}_{nn} &= \frac{\partial x_n}{\partial t_n} = -\prod_{k =1}^{n-1} t_{k}.
    \end{aligned}
\end{equation}
We know $\mathcal{J}$ is lower triangular since for all $n' > n$, $\mathcal{J}_{nn'} = \frac{\partial x_n}{\partial t_{n'}} = 0$ from Equation \ref{eqn:carttocomp}.

Since the determinant of a lower triangular matrix is the same as the product of the diagonal, we can directly evaluate the unbiased measure (off by a normalization constant) as
\begin{equation} \label{eqn:haarmeasurecomp}
    \begin{aligned}
    \mathrm{d}\mu(U^{(m)}) &\propto \mathrm{d}\gamma_{N - m + 1} \det \mathcal{J} \prod_{n=1}^{m - 1} \mathrm{d}t_n \prod_{n=1}^{m} \mathrm{d}\phi_{n} \\
    &= \mathrm{d}\gamma_{N - m + 1} \prod_{n = 1}^{m-1} \mathcal{J}_{nn} \prod_{n=1}^{m-1} \mathrm{d}t_n \prod_{n=1}^{m} \mathrm{d}\phi_{n} \\
    &\propto \mathrm{d}\gamma_{N - m + 1} \prod_{n = 2}^{m - 1} t_{n - 1}^{m - n} \prod_{n=1}^{m-1} \mathrm{d}t_n \prod_{n=1}^{m} \mathrm{d}\phi_{n}
    \end{aligned}
\end{equation}

To get the total Haar measure, we multiply the volume elements for the orthogonal components $\mathrm{d}\mu(U^{(m)})$. We get from this procedure that the sensitivity index $\alpha_{n\ell} = N - n$ for a triangular mesh in Equation \ref{eqn:haarmeasurecomp} (independent of $\ell$), which can be seen using Figure \ref{fig:triangularmesh}. We can express this Haar measure in terms of $\mathcal{Q}_{\alpha_{n\ell}}(t_{n\ell})$, the probability distribution for the transmissivity, and $\mathcal{P}_{\alpha_{n\ell}}(\theta_{n\ell} / 2)$, the probability distribution for the phase shift corresponding to that same transmissivity, assuming appropriate choice $n, \ell$ for the triangular mesh:
\begin{equation} \label{eqn:haarmeasurefull}
    \begin{aligned}
    \mathrm{d}\mu(U) &= \prod_{n=1}^{N} \mathrm{d}\mu(U^{(n)}) \\
    &= \prod_{n} \mathrm{d}\gamma_n \prod_{n, \ell} \mathcal{Q}_{\alpha_{n\ell}}\left(t_{n\ell}\right) dt_{n\ell} \mathrm{d}\phi_{n\ell} \\
    &= \prod_{n} \mathrm{d}\gamma_n \prod_{n, \ell} \mathcal{P}_{\alpha_{n\ell}}\left(\frac{\theta_{n\ell}}{2}\right) \mathrm{d}\theta_{n\ell} \mathrm{d}\phi_{n\ell}
    \end{aligned}
\end{equation}

We can now normalize Equation \ref{eqn:haarmeasurecomp} using the normalization factor for $P$ to get $\mathcal{Q}_{\alpha_{n\ell}}(t_{n\ell})$ and then substitute $t_{n\ell} = \cos^2(\theta_{n\ell} / 2)$ to get our desired expression for $\mathcal{P}_{\alpha_{n\ell}}(\theta_{n\ell} / 2)$:

\begin{equation}\label{eqn:haarpdfs}
    \begin{aligned}
    \mathcal{Q}_{\alpha_{n\ell}}\left(t_{n\ell}\right) &= \alpha_{n\ell} t_{n\ell}^{\alpha_{n\ell} - 1} \\
    \mathcal{P}_{\alpha_{n\ell}}\left(\frac{\theta_{n\ell}}{2}\right) &= \alpha_{n\ell}\sin\left(\frac{\theta_{n\ell}}{2}\right) \left[\cos\left(\frac{\theta_{n\ell}}{2}\right)\right]^{2\alpha_{n\ell} - 1}.
    \end{aligned}
\end{equation}

Finally, we can recover the Haar phase parameter $\xi_{n\ell} \in [0, 1]$ (i.e. the cumulative density function) in terms of either $t_{n\ell}$ or $\theta_{n\ell}$: 

\begin{equation}\label{eqn:physicalhaarphase}
    \begin{aligned}
    \xi_{n\ell} &= \left[\cos\left(\frac{\theta_{n\ell}}{2}\right)\right]^{2\alpha_{n\ell}} = t_{n\ell}^{\alpha_{n\ell}}.
    \end{aligned}
\end{equation}

Finally, as explained in Ref. \cite{Russell2017DirectMatrices}, we can use the Clements decomposition \cite{Clements2016AnInterferometers} to find another labelling for $\alpha_{n\ell}$ in a rectangular mesh that gives probability distributions and Haar phases in the same form as Equations \ref{eqn:haarpdfs} and \ref{eqn:physicalhaarphase} respectively.

\section{Unitary bandsizes} \label{sec:bandsize}

We would like to quantify the bandedness of matrices implemented by the meshes with randomly initialized phases. We define the $\eta$-bandsize as the minimum number of matrix elements whose absolute value squared sums to $(1 - \eta)N$. Note that our $\eta$-bandsize measurement is agnostic of the ordering of the inputs and outputs, and is therefore agnostic to any permutations that may be applied at the end of the decomposition. In photonics terms, if $\eta = 0.001$, let $r_i$ measure the fraction of output waveguides over which $99.9\%$ of the power is distributed when light is input into waveguide $i$. The $\eta$-bandsize is $r_i$ averaged over all $i$. Sampling from our matrix distributions, we observe the relationship between the bandsize (given $\eta = 0.001$) and the dimension $N$ in Figure \ref{fig:bandlimits}.

\begin{figure}[h]
    \centering
    \includegraphics[width=0.46\textwidth]{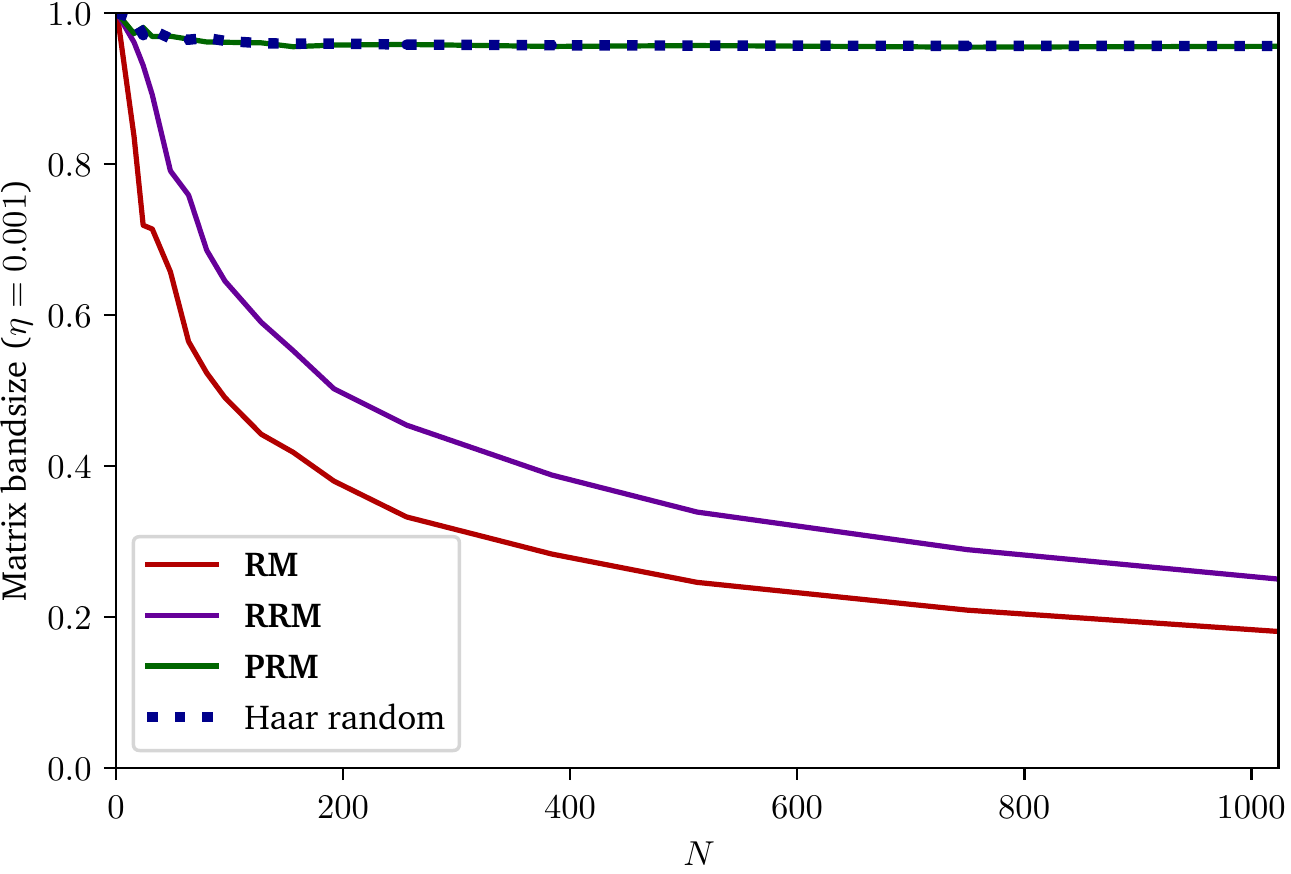}
    \caption{Given $\eta = 0.001$, we compare bandsizes for rectangular ($U \sim \mathcal{U}_\mathrm{R}(N, N)$), permuting rectangular ($U \sim \mathcal{U}_\mathrm{PR}(N)$), and redundant meshes ($U \sim \mathcal{U}_\mathrm{R}(N, 2N)$). Permuting rectangular meshes match the bandsize of Haar random matrices.}
    \label{fig:bandlimits}
\end{figure}

\section{Introducing photonic errors in a redundant mesh} \label{sec:rrmphotonicerrors}

\begin{figure}[ht]
    \centering
    \includegraphics[width=0.48\textwidth]{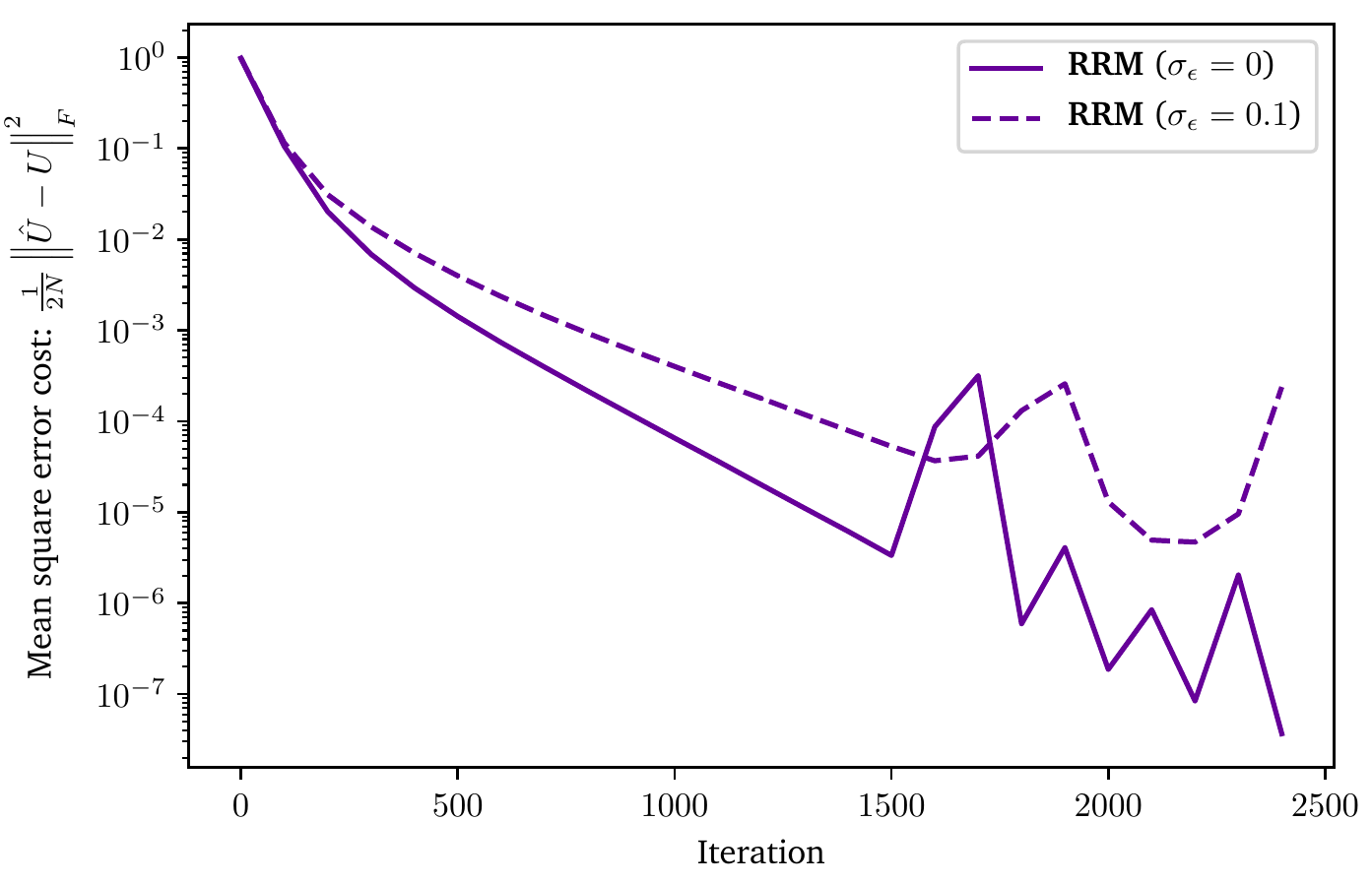}
    \caption{A comparison of test mean square error for $N = 128$ between redundant rectangular meshes with error $\epsilon$ for $256$-layer mesh for: $20000$ iterations, Adam update, learning rate of $0.0025$, batch size of 256, simulated in \texttt{tensorflow}.}
    \label{fig:rrmphotonic}
\end{figure}

When photonic errors are added to the redundant mesh, specifically the $256$-layer mesh, we observe a slight decrease in optimization performance in Figure \ref{fig:rrmphotonic}, similar to what we observed for the rectangular and permuting rectangular meshes in Figure \ref{fig:rrm}. This decrease in performance, however, is less concerning considering that we still achieve a mean square error of around $10^{-5}$, suggesting that RRM might be more robust to photonic errors even during on-chip optimization.

\begin{figure}[ht]
    \centering
    \includegraphics[width=0.48\textwidth]{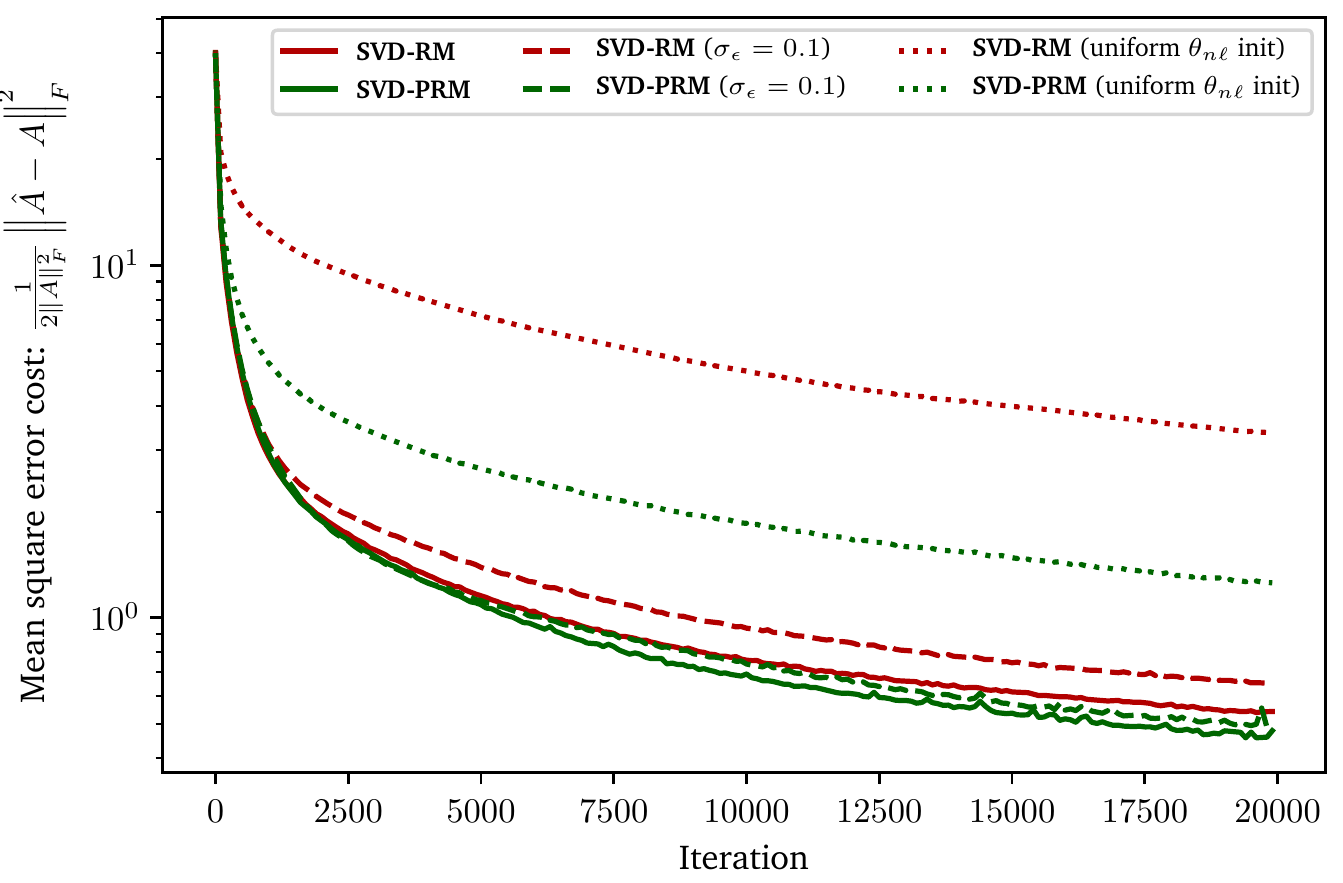}
    \caption{A comparison of test mean square error for $N = 64$ between SVD devices using rectangular (SVD-RM) and permuting rectangular (SVD-PRM) meshes for: $20000$ iterations, Adam update, learning rate of $0.005$, batch size of $128$, simulated in \texttt{tensorflow}. Unless otherwise noted, the default setting is Haar random initialized $\theta_{n\ell}$ with $\sigma_\epsilon = 0$.}
    \label{fig:svd}
\end{figure}

\section{Photonic singular value decomposition simulations} \label{sec:svdcomparison}
We compare the simulated performance of such rectangular and permuting rectangular architectures in the singular value decomposition (SVD) configuration discussed in Refs. \cite{Miller2013Self-configuringInvited, Shen2017DeepCircuits}. Such architectures would allow one to perform arbitary linear operations with a relatively small footprint, and may have some other useful dimensionality-reduction properties in machine learning contexts.

In SVD, we represent complex matrix $\hat A \in \mathbb{C}^M \times \mathbb{C}^N$ as $\hat A = \hat U \hat \Sigma \hat V^\dagger$, where $\hat \Sigma$ is a diagonal matrix implemented on-chip with $\min(M, N)$ single-mode gain or attenuating elements and $\hat U, \hat V^\dagger$ are unitary matrices implemented in a photonic mesh. While $\hat A$ has $2MN$ free parameters, any global optimization for a photonic SVD implementation using rectangular meshes can have at most $D = N(N - 1) + M(M - 1) + 2\min(N, M) \geq 2MN$ free parameters, with equality when $M = N$. In the triangular architecture discussed in Ref. \cite{Miller2013Self-configuringInvited}, the total complexity of parameters can be exactly $D = 2MN$ when setting a subset of the beamsplitters to bar state. In the case where the total number of singular values for $\hat{A}$ is $S < \min(M, N)$, we get $D = 2S(M + N - S)$ tunable elements. Additionally, there is an ``effective redundancy" in that some vectors in $U, V$ are more important than others due to the singular values.

In our simulations, we investigate an SVD architecture for $A = U \Sigma V^\dagger$ for $A \in \mathbb{C}^M \times \mathbb{C}^N$ composed of the unitaries $U \in \mathbb{C}^M \times \mathbb{C}^M$ and $V \in \mathbb{C}^N \times \mathbb{C}^N$. Note that such an architecture is redundant when $M \neq N$, so we focus on the simple case of $M = N = 64$.

We define our train and test cost functions analogous to the unitary mean-squared error cost functions as
\begin{equation}
\begin{aligned}
    \mathcal{L}_\mathrm{test} &= \frac{N\| \hat A - A\|_F^2}{2\|A\|_F^2}\\
    \mathcal{L}_\mathrm{train} &= \|\hat A X - A X\|_F^2,
\end{aligned}
\end{equation}
where $\hat A = \hat U \hat \Sigma \hat V^\dagger$ is defined in Section \ref{sec:simulation}.

We randomly generate $A \in \mathbb{C}^N \times \mathbb{C}^M$ by expressing $A_{jk} = a + i b$, where $a, b \sim \mathcal{N}(0, 1)$. The synthetic training batches of unit-norm complex vectors are represented by $X \in \mathbb{C}^{N \times 2N}$. 

Assuming a procedure similar to \cite{Hughes2018TrainingMeasurement} can be used in presence of gains and optimization, the permuting rectangular mesh converges slightly faster but is significantly more resilient to uniform random phase initialization compared to the rectangular mesh as shown in Figure \ref{fig:svd}. Both optimizations are minimally affected by beamsplitter error, unlike what is seen in the unitary optimization case.

\section{Periodic parameters} \label{sec:periodicparameters}
We comment on our reported values of $\theta_{n\ell}$ in the checkerboard plots in Figures \ref{fig:haarphase} (of the main text) and \ref{fig:resultcomparison}. Since our simulated optimization does not have the explicit constraint that $\theta_{n\ell} \in [0, \pi)$, we report the ``absolute $\theta_{n\ell}$," where we map all values of $\theta_{n\ell}/2$ to some value in $[0, \pi / 2]$. This corresponds to the transformation (assuming $\theta_{n\ell}$ is originally between $0$ and $2\pi$):
\begin{equation}
    \begin{aligned} \label{eqn:absolutetheta}
        \theta_{n\ell} \to \begin{cases}
        \theta_{n\ell}  & \theta_{n\ell} \leq \pi\\
        2\pi - \theta_{n\ell} & \theta_{n\ell} > \pi\\
        \end{cases}.
    \end{aligned}
\end{equation}

Note a similar treatment as Equation \ref{eqn:absolutetheta} can be used to represent the Haar phase $\xi \in [0, 1]$ in terms of a ``periodic" Haar phase $\widetilde{\xi} \in [0, 2]$ with period 2:
\begin{equation} \label{eqn:periodichaar}
    \begin{aligned}
        \xi(\widetilde{\xi}) &= \begin{cases}
        \widetilde{\xi}  & \widetilde{\xi} \leq 1\\
        2 - \widetilde{\xi} & \widetilde{\xi} > 1
        \end{cases}.
    \end{aligned}
\end{equation}

Note both $\widetilde{\xi}$ and $\widetilde{\theta}$ can therefore be made to vary continuously from $(-\infty, \infty)$ with $\widetilde{\xi}$ having a period of 2 and $\widetilde{\theta}$ having a period of $2\pi$. We map these periodic parameters to their half-periods according to Equations \ref{eqn:absolutetheta} and \ref{eqn:periodichaar} based on symmetry arguments.

\section{Training simulation comparisons} \label{sec:unitaryoptcomparison}

\begin{figure*}[t]
    \centering
    \includegraphics[width=0.4\textwidth]{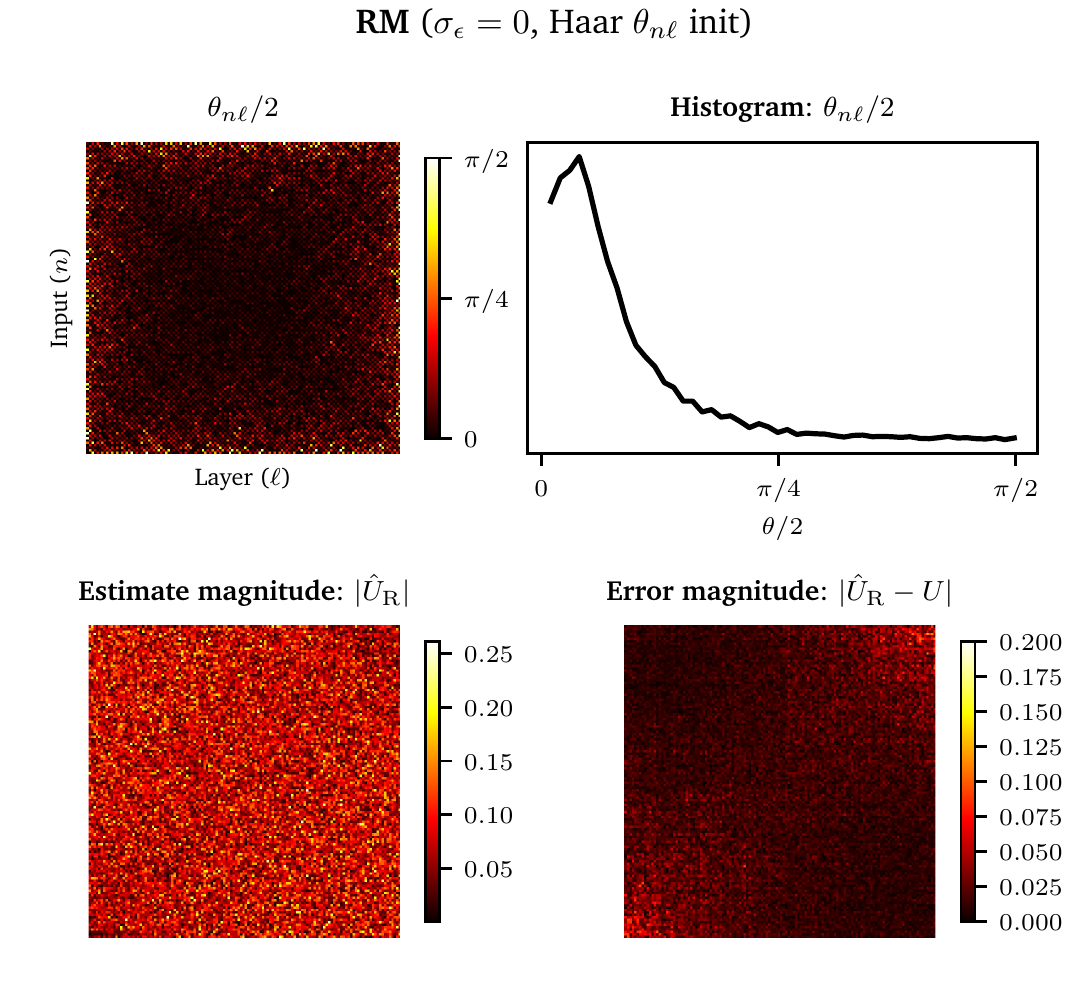}
    \includegraphics[width=0.4\textwidth]{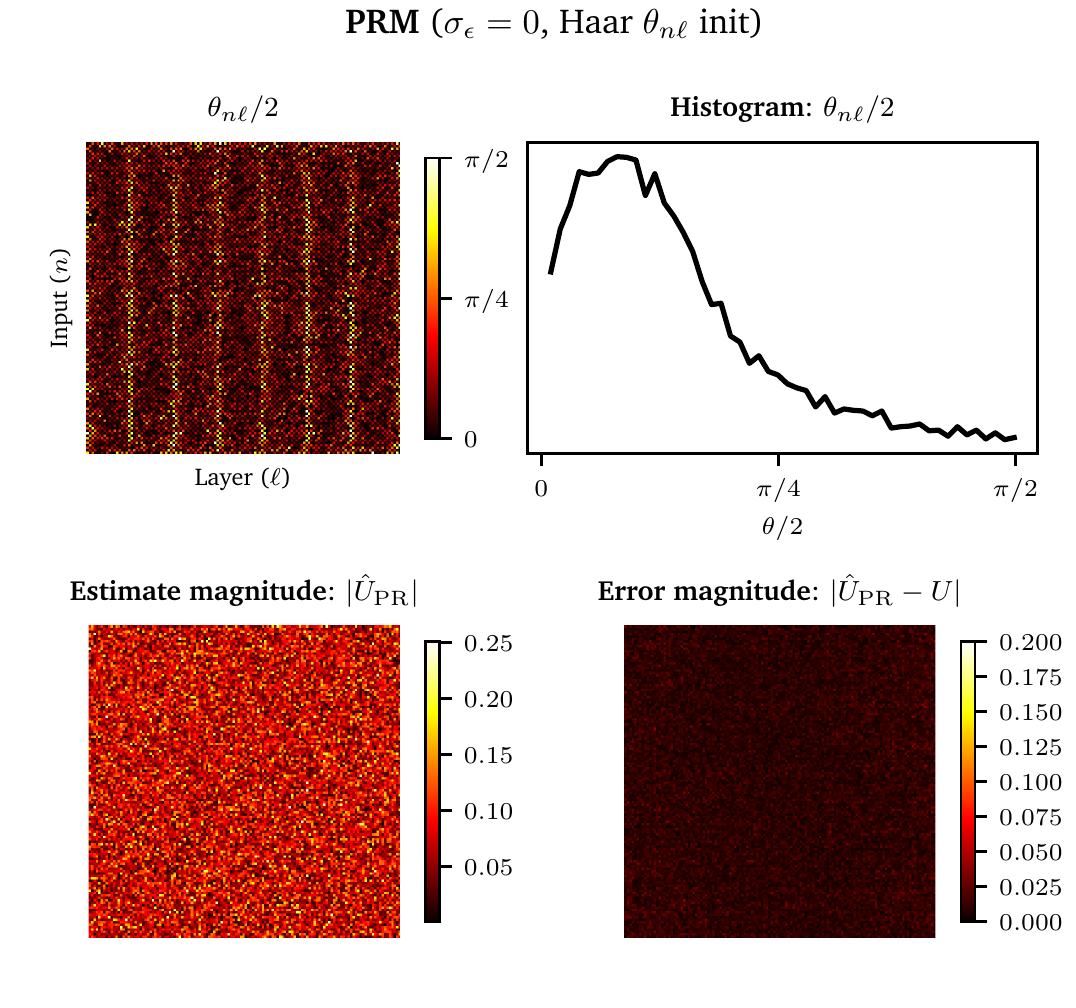}
    \includegraphics[width=0.4\textwidth]{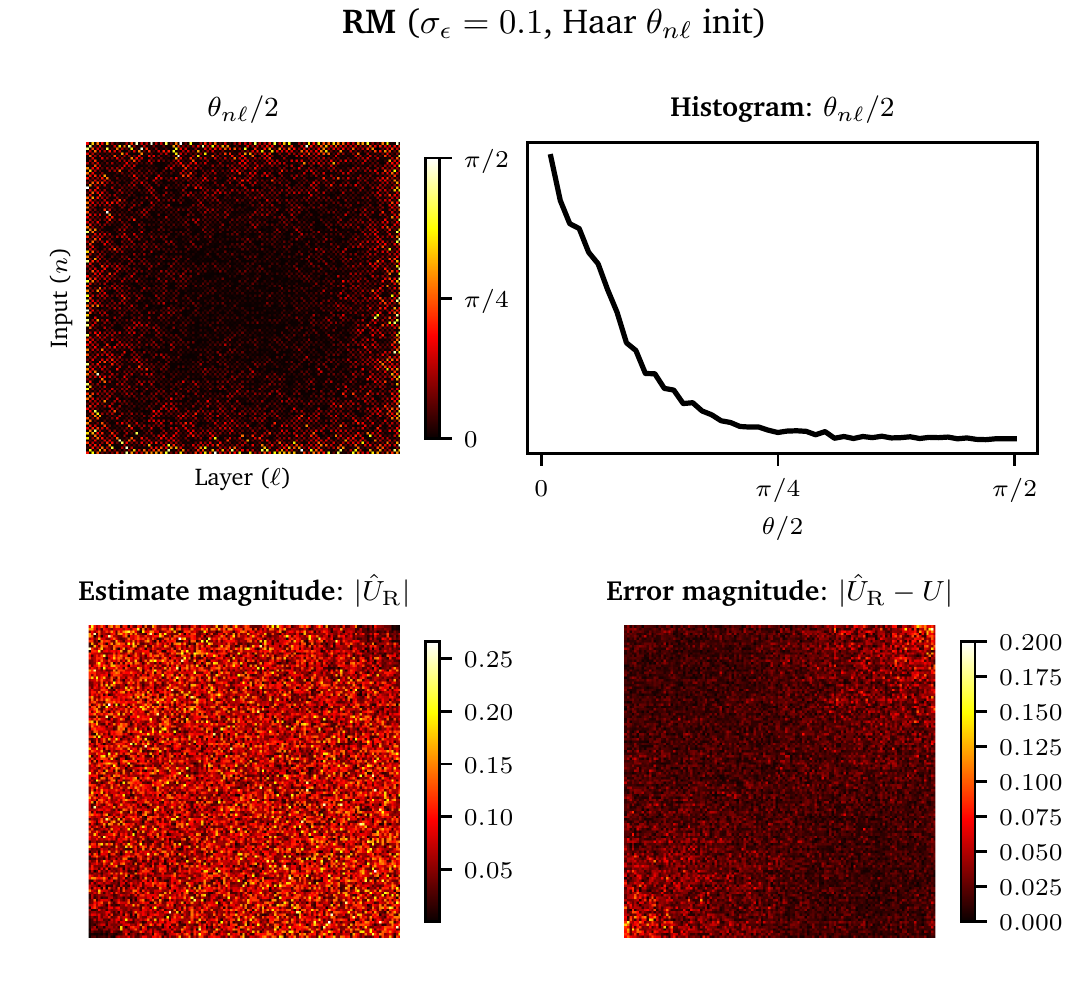}
    \includegraphics[width=0.4\textwidth]{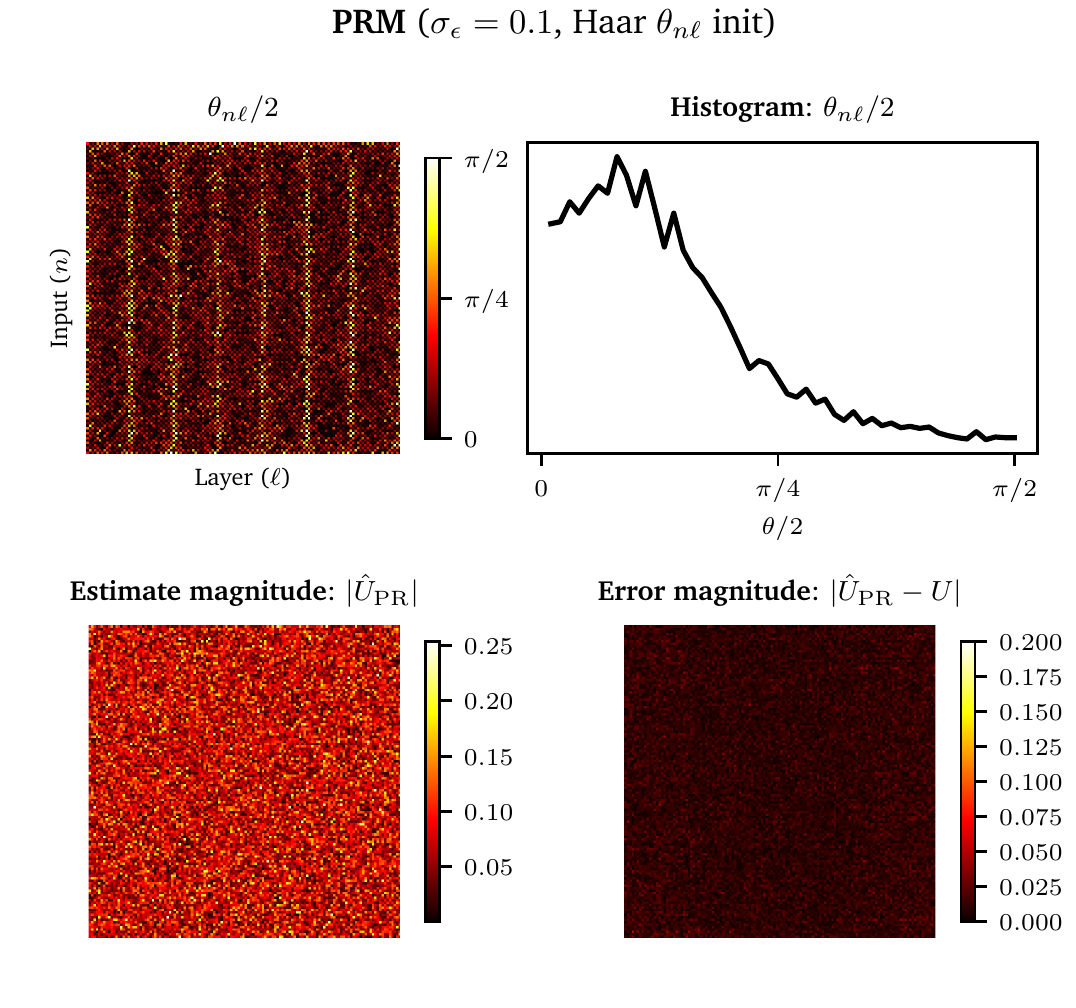}
    \includegraphics[width=0.4\textwidth]{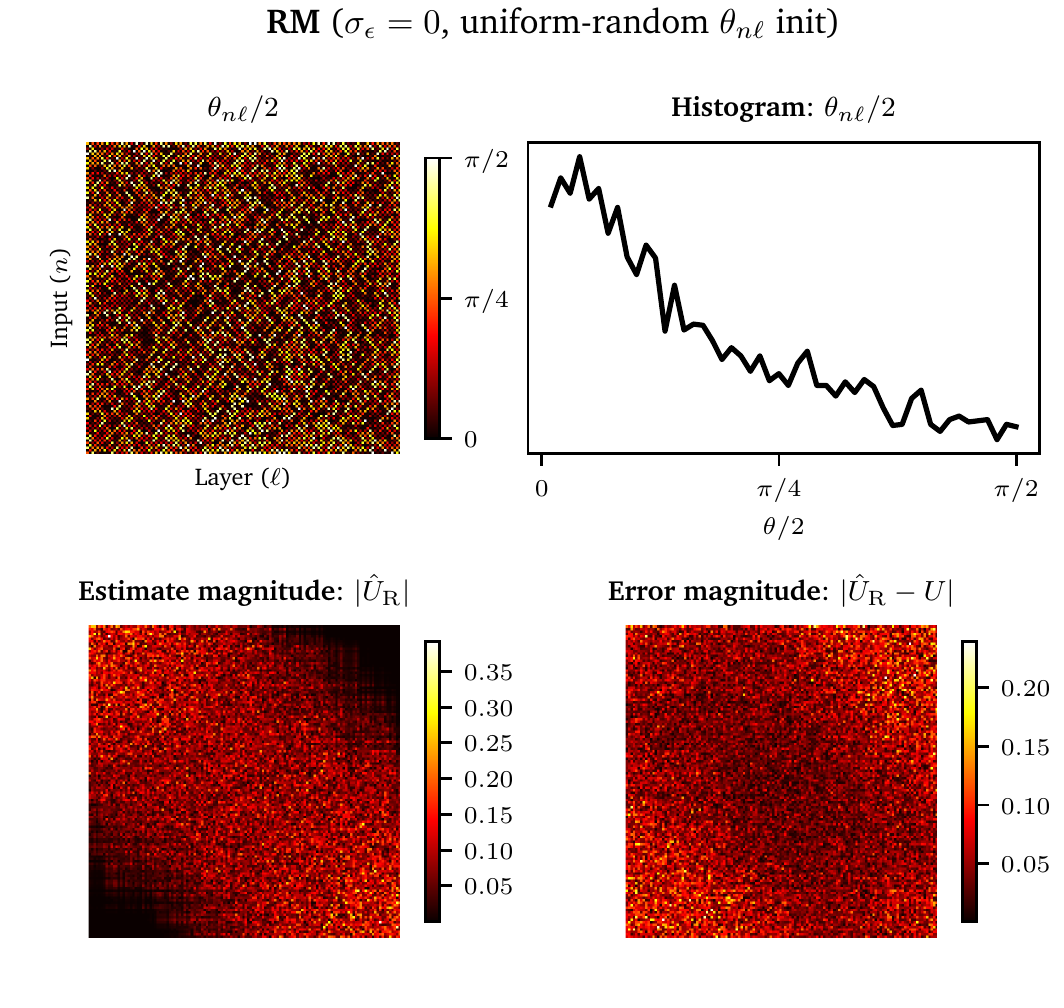}
    \includegraphics[width=0.4\textwidth]{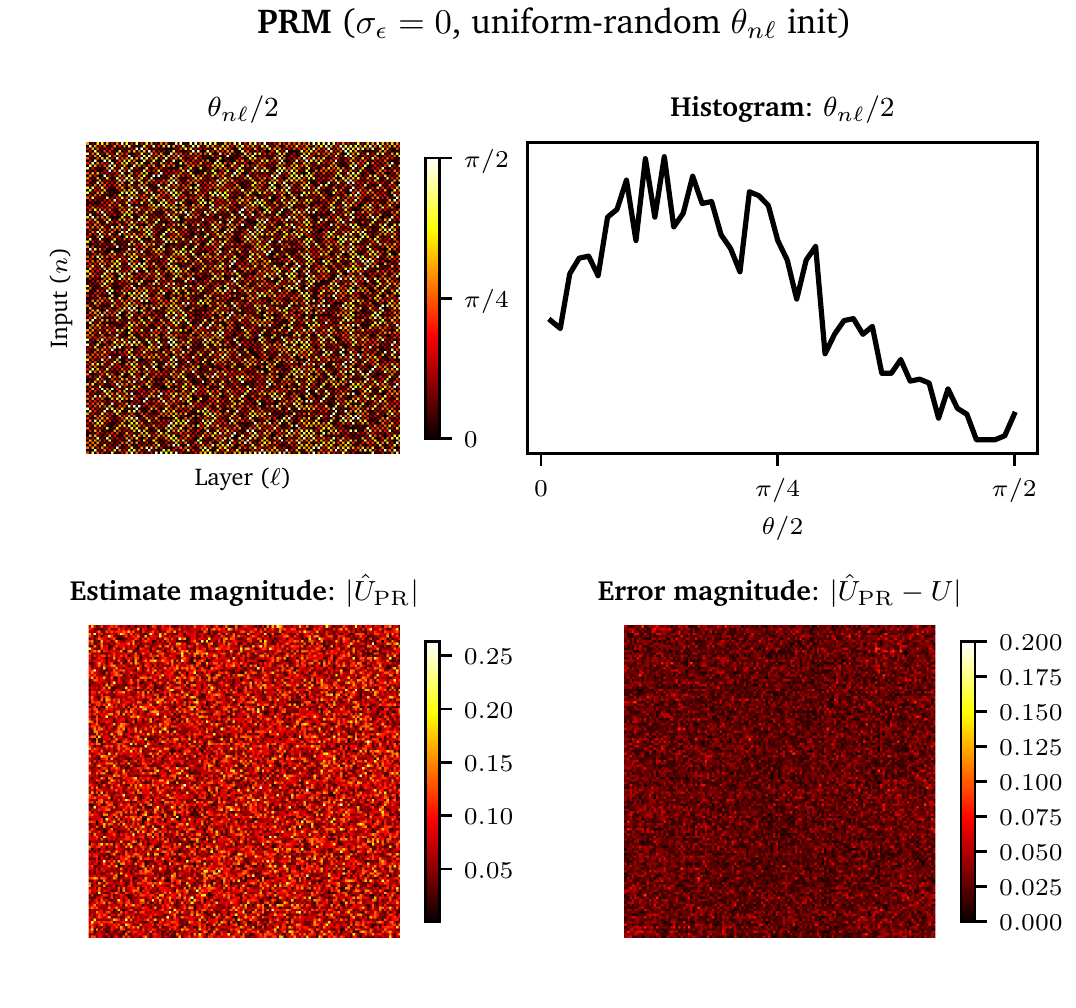}
    \caption{Comparison of learned matrix errors and learned $\theta_{n\ell}$ weights after $20000$ iterations for the Adam update at learning rate $0.0025$ and batch size $256$ for the simple unitary network. We consider two meshes: (1) rectangular mesh (RM), and (2) permuting rectangular mesh (PRM). We consider three conditions for each mesh: (1) ideal (with Haar random unitary initialization), (2) photonic beamsplitter error displacement $\epsilon \sim \mathcal{N}(0, 0.01)$, (3) random initialization.}
    \label{fig:resultcomparison}
\end{figure*}

In Figure \ref{fig:resultcomparison}, we compare the performance for our unitary network experiment over our aforementioned conditions in Section \ref{sec:simulation}. For each plot, we also have an associated video, showing how the parameter distributions, estimates, and errors vary during the course of the optimization, available online.\footnote{See \url{https://av.tib.eu/series/520/photonic+optimization}.} 

There are several takeaways from these plots. First, the reflectivity of the MZIs near the center of the mesh are much smaller in the optimized rectangular meshes than in the permuting rectangular meshes. Second, the gradient descent algorithm has a hard time finding the regime of Haar random matrices after a uniform random phase initialization. The values of $\theta_{n\ell}$ are much larger than they need to be even 100 iterations into the optimization. This is likely evidence of a ``vanishing gradient" problem when the mesh is not Haar-initialized. Finally, an important observation for the meshes with beamsplitter error is that the $\theta_{n\ell} / 2$ distribution shifts slightly towards $0$ in the rectangular mesh. This is a consequence of the limits in reflectivity and transmissivity in each MZI due to beamsplitter fabrication error as discussed in Section \ref{sec:meshintro}.

Our simulated permuting rectangular implementation uses the same layer definitions as defined in Equation \ref{eqn:permutingrectangularmesh} except the $P_k$ with the most layers are in the center of the mesh, and the $P_k$ with the fewest layers are near the inputs and outputs of the mesh. In Figure \ref{fig:architectures}, $P_2$ and $P_3$ would be switched, and for $N = 128$, the order is $[P_2, P_4, P_6, P_5, P_3, P_1]$. We find this configuration to be the best permuting rectangular mesh so far in our experiments, although the architecture in Equation \ref{eqn:permutingrectangularmesh} gives improvements over the rectangular mesh.

\begin{figure}[ht]
    \centering
    \includegraphics[width=0.48\textwidth]{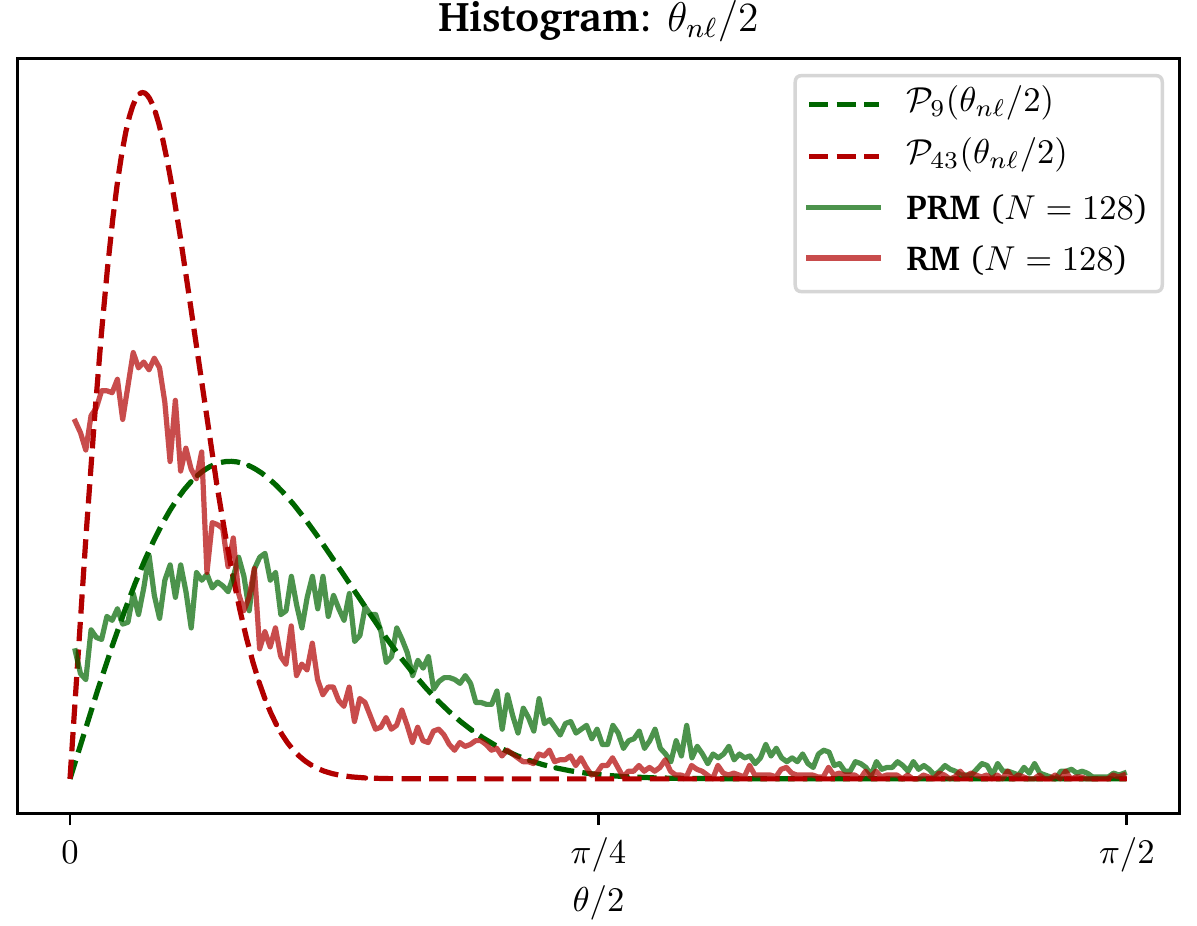}
    \caption{Comparison of learned, normalized $\theta_{n\ell}$ distributions for $N = 128$ rectangular (RM) and permuting rectangular (PRM) meshes with $\mathcal{P}_\alpha(\theta / 2)$ PDFs for $\alpha = \frac{N + 1}{3} = 43$ (the average sensitivity index) and $\alpha = \floor{\frac{N}{2\log N}} = 9$ respectively. Note that permuting meshes have a larger tolerance, which eventually results in faster mesh optimization.}
    \label{fig:rmprmtheta}
\end{figure}

\section{An equivalent definition for \texorpdfstring{$\alpha_{n\ell}$}{anl}} \label{sec:haarproof}

Let $\alpha_{n\ell}$ be the sensitivity index for an MZI (``node'') at (waveguide, layer) coordinates $\left(n, \ell\right)$ in a local decomposition for an $N\times N$ unitary operator. We define the ``row coordinate'' or waveguide index $n$ from the MZI's operator $U_n$ coupling waveguides $n$ and $n+1$, and we define the ``column coordinate'' or layer index $m$ to be $\ell=k+1$, where $k$ is the maximum number of operators applied to a reachable input (This is equivalent to the vertical layers definition in Figure \ref{fig:rectangularmesh}.). The reachable inputs $I_{n\ell}$ are the subset of input modes affecting the immediate inputs of the MZI at $(n,\ell)$, and the reachable outputs $O_{n\ell}$ are the subset of output modes affected by the immediate outputs of the MZI.

Following the definitions in Ref. \cite{Russell2017DirectMatrices}, in the triangular scheme, $\alpha_{n\ell}:= N-n$, and in the rectangular scheme, $\alpha_{n\ell}:= d\left(n,\ell\right)+1-s_{n\ell}[\ell]$ where $d(n,\ell)$ is the number of nodes on the diagonal (measured along paths of constant $n+\ell$) containing a rotation parameterized by $\theta_{n\ell}$, and $s_{n\ell}$ is a sequence of decreasing odd integers $d(n,\ell)\ge k_{\text{odd}}\ge1$, followed by increasing even integers $2\le k_{\text{even}}\le d(n,\ell)$, as defined in \cite{Russell2017DirectMatrices}. We prove below that for both the triangular and rectangular meshes, $\alpha_{n\ell}=\abs{I_{n\ell}}+\abs{O_{n\ell}}-N-1$.

\begin{lemma}
In the triangular mesh, $\alpha_{n\ell}=\abs{I_{n\ell}}+\abs{O_{n\ell}}-N-1$.
\end{lemma}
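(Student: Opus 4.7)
\emph{Proof sketch.} Substituting the triangular-mesh value $\alpha_{n\ell} = N - n$ reduces the claim to
\[
|I_{n\ell}| + |O_{n\ell}| = 2N + 1 - n.
\]
The right-hand side depends only on $n$, not on $\ell$, which suggests the strategy: as an MZI slides rightward across the layers coupling a fixed waveguide pair $(n, n+1)$, $|I|$ should grow while $|O|$ shrinks at the same rate. My plan is to compute $|I|$ directly from the Reck structure, and then obtain $|O|$ from a reflection symmetry.

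Index the $n$ MZIs in pair $(n, n+1)$ left-to-right by $j = 1, \ldots, n$. From Equation \ref{eqn:reck} one can read off that the $j$-th such MZI sits in physical column $\ell(n, j) = N - n + 2(j - 1)$. The main claim is $|I_{n, j}| = N - n + j$, which I would prove by induction on the column index. The base column $\ell = 1$ contains the single MZI on pair $(N - 1, N)$, whose two inputs are the external ports $N - 1$ and $N$, giving $|I_{N - 1, 1}| = 2$. For the inductive step at column $\ell$, trace the two inputs of the MZI at $(n, j)$: generically the waveguide-$n$ input is the output of the $(j - 1)$-th MZI of pair $(n - 1, n)$ at column $\ell - 1$, and the waveguide-$(n + 1)$ input is the output of the $j$-th MZI of pair $(n + 1, n + 2)$, also at column $\ell - 1$. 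Every reachable-input set turns out to be a contiguous suffix $\{k, k + 1, \ldots, N\}$, so the union of the two hypothesised sets gives $I_{n, j} = \{n - j + 1, \ldots, N\}$ and hence the advertised size. The main obstacle I anticipate is the boundary bookkeeping: when $n = N - 1$ the pair $(n + 1, n + 2)$ does not exist, and the waveguide-$N$ input must instead be traced back two columns in the same pair; when $n = 1$ the waveguide-$1$ input is supplied directly by the external input. Both cases fit the contiguous-suffix formula after a short separate check.

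For $|O_{n, j}|$, rather than rerun the analysis, I would invoke the horizontal reflection $\ell \mapsto 2N - 2 - \ell$, which carries the Reck mesh to itself and interchanges inputs with outputs. Under this reflection the $j$-th MZI of pair $(n, n + 1)$ is sent to the $(n - j + 1)$-th MZI of the same pair, so $|O_{n, j}| = |I_{n, n - j + 1}| = N - j + 1$. Summing gives
\[
|I_{n\ell}| + |O_{n\ell}| = (N - n + j) + (N - j + 1) = 2N + 1 - n = N + 1 + \alpha_{n\ell},
\]
which is the identity we set out to prove.
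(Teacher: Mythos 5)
Your proof is correct, but it takes a genuinely different route from the paper's. The paper proves the identity $2N+1 = \abs{I_{n\ell}} + \abs{O_{n\ell}} + n$ by a two-dimensional induction directly on the lattice coordinates: it shows that stepping $n \to n+2$ or $\ell \to \ell+2$ leaves the combination $\abs{I_{n\ell}}+\abs{O_{n\ell}}+n$ invariant (each step changes $\abs{I}$ and $\abs{O}$ by $\pm 1$ in compensating ways), and then checks two base cases of opposite parity at the apex. It never computes $\abs{I}$ or $\abs{O}$ individually. You instead derive the explicit closed forms $\abs{I_{n,j}} = N-n+j$ and $\abs{O_{n,j}} = N-j+1$ (equivalently $\abs{I_{n\ell}} = (N+\ell-n)/2+1$ and $\abs{O_{n\ell}} = (3N-\ell-n)/2$, which are consistent with the paper's increments and base cases), obtaining the first by tracing connectivity down the diagonals and the second by the horizontal reflection $\ell \mapsto 2N-2-\ell$ of the Reck layout. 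Your route yields strictly more information --- the separate sizes and the observation that the triangular mesh is left--right symmetric as a graph --- at the cost of more bookkeeping (the column formula $\ell(n,j)=N-n+2(j-1)$, the boundary cases at $n=N-1$ and $n=1$, and a justification, which you assert but do not spell out, that the reflection really is a mesh automorphism exchanging inputs and outputs; this follows from the palindromic layer-occupancy of each waveguide pair and is worth one explicit sentence). The paper's version is leaner and is structured to parallel the subsequent rectangular-mesh proof, which also proceeds by compensating increments in a diagonal basis.
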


\begin{proof}
In the triangular mesh (shown for $N=8$ in Figure \ref{fig:triangularmesh}) $\alpha_{n\ell}:= N-n$, so we wish to show that $N-n = \abs{I_{n\ell}}+\abs{O_{n\ell}}-N-1$, or:
\begin{equation} \label{eqn:triangleinduction}
    2N+1=\abs{I_{n\ell}}+\abs{O_{n\ell}}+n.
\end{equation}

Suppose Equation \ref{eqn:triangleinduction} holds for some arbitrary $n', \ell'$ in the mesh, such that $2N+1=\abs{I_{n'\ell'}}+\abs{O_{n'\ell'}}+n'$. First, induct on $n$: if we take $n=n'+2$ and $\ell=\ell'$, then $\abs{I_{n\ell}}=\abs{I_{n'\ell'}}-1$ and $\abs{O_{n\ell}}=\abs{O_{n'\ell'}}-1$. Next, induct on $\ell$: if we take $n=n'$ and $\ell=\ell'+2$, then $\abs{I_{n\ell}}=\abs{I_{n'\ell'}}+1$ and $\abs{O_{n\ell}}=\abs{O_{n'\ell'}}-1$. In both cases, Equation \ref{eqn:triangleinduction} holds. 

Traversals by 2 along $n$ or $\ell$ from a starting node can reach all nodes with the same parity of $n$ and $\ell$, so we need two base cases. Consider the apex node at $n=1$, $\ell=N-1$ and one of its neighbors at $n=2$, $\ell=N$. The former has $\abs{I_{n\ell}}=\abs{O_{n\ell}}=N$ and the latter has $\abs{I_{n\ell}}=N$ and $\abs{O_{n\ell}}=N-1$. In both cases, Equation \ref{eqn:triangleinduction} is satisfied, so the lemma holds by induction.
\end{proof}

\begin{lemma}
In the rectangular mesh, $\alpha_{n\ell}=\abs{I_{n\ell}}+\abs{O_{n\ell}}-N-1$.
\end{lemma}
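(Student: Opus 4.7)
The plan is to mimic the inductive argument used for the triangular mesh, with an extra inner induction to handle the alternating odd/even parity of layers and the boundary clipping specific to the rectangular mesh. First I would verify that $I_{n\ell}$ and $O_{n\ell}$ are each contiguous intervals of waveguides: starting from $\{n, n+1\}$ and traversing a single MZI layer backward (respectively forward) sends a contiguous interval to a contiguous interval, possibly grown by one on each side and possibly clipped at $1$ or $N$, because every waveguide in the current layer either passes through an MZI paired with its upper or lower neighbor or sits unpaired at the top/bottom edge on even layers. Writing $I_{n\ell} = [a^I, b^I]$ and $O_{n\ell} = [a^O, b^O]$, I would then derive closed-form recursions for these four endpoints in $(n, \ell, N)$ by tracking how the parity of each endpoint interacts with the parity of the next layer traversed.

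Next I would run the outer induction on node position. As in the triangular argument, the moves $(n,\ell) \to (n+2,\ell)$ and $(n,\ell) \to (n,\ell+2)$ between them span every node sharing the parity of $n+\ell$, so two base cases (one per parity class) suffice to close the induction. A $+2$ shift in $n$ translates the seed pair $\{n,n+1\}$ rightward by $2$, translating both intervals by $2$ unless one endpoint is already pinned at a boundary, in which case the partner endpoint absorbs the slack. A $+2$ shift in $\ell$ adds two backward layers (potentially enlarging $|I_{n\ell}|$ by up to $2$, depending on how the endpoint parities align with the new layer parities) and removes two forward layers (shrinking $|O_{n\ell}|$ by a matching amount). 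In each subcase I would check that the net change to $|I_{n\ell}| + |O_{n\ell}|$ matches the change in $\alpha_{n\ell} = d(n,\ell) + 1 - s_{n\ell}[\ell]$ induced by the same move. For the base cases, $(n,\ell) = (1,1)$ gives $|I_{1,1}|=2$ and $|O_{1,1}|=N$, so $|I|+|O|-N-1 = 1$, in agreement with $d(1,1)+1-s_{1,1}[1] = 1+1-1 = 1$; a direct check at $(2,2)$ (or any convenient neighbor of opposite parity) handles the other parity class.

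The main obstacle is the boundary bookkeeping. The reachable interval stops growing on its left as soon as $a^I = 1$ and on its right as soon as $b^I = N$, and symmetrically for $O_{n\ell}$; meanwhile the sequence $s_{n\ell}$ transitions from its decreasing-odd phase to its increasing-even phase at a specific index tied to $d(n,\ell)$. The content of the lemma in the rectangular case is essentially that these two transitions occur at the same layer. Unpacking the definition of $s_{n\ell}$ to confirm this coincidence—rather than the inductive bookkeeping itself—is the step most likely to be delicate, since off-by-one errors at the parity boundary would propagate through every subcase of the inductive step.
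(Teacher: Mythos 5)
Your overall strategy --- induct over the MZI lattice, comparing the increment of $\alpha_{n\ell}=d(n,\ell)+1-s_{n\ell}[\ell]$ against the increment of $\abs{I_{n\ell}}+\abs{O_{n\ell}}$ under elementary moves, with one base case per parity class --- is the same strategy the paper uses. The paper's main organizational difference is that it first passes to diagonal coordinates $(x,y)=\bigl(\tfrac{n+\ell}{2},\tfrac{\ell-n}{2}+\floor{N/2}\bigr)$, in which $d$ depends only on $x$ and $s$ only on $y$; each unit move then changes exactly one of the two quantities, and the case analysis reduces to four quadrants separated by the lines $x,y=\floor{N/2}$ plus two boundary crossings. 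Your double-step moves in $(n,\ell)$ are compositions of those unit moves, so the plan is workable in principle, but each of your steps changes $x$ and $y$ simultaneously and can cross both quadrant boundaries at once, so you inherit strictly more subcases than the paper's tables.

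The concrete problem is that the increments you assert for the $\ell\to\ell+2$ move are wrong, so the inductive step would not close as written. In the unclipped situation each backward layer enlarges the reachable-input interval by one waveguide on \emph{each} end regardless of parity (waveguide $n$ is always the lower member of a pair in layer $\ell-1$ and $n+1$ the upper member of another, and this persists down the recursion), so two extra backward layers grow $\abs{I_{n\ell}}$ by up to $4$, not $2$; the variation is controlled by clipping at waveguides $1$ and $N$, not by ``endpoint parities.'' Moreover $\abs{O_{n\ell}}$ does not shrink by ``a matching amount'': in the lower-left region of the mesh (below the main diagonal and left of the antidiagonal) every output is already reachable, so $\abs{O_{n\ell}}=N$ is unchanged while $\abs{I_{n\ell}}$ grows by $4$, and correspondingly $\alpha_{n\ell}$ must increase by $4$ there (since $d$ grows by $2$ and $s$ drops by $2$); if $\abs{I_{n\ell}}+\abs{O_{n\ell}}$ were invariant under $\ell\to\ell+2$ as your phrasing suggests, the claimed identity could not hold. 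These sign and magnitude checks are exactly what the paper's quadrant and boundary tables record, and your correct intuition that the interval-clipping transition and the odd-to-even transition of $s_{n\ell}$ ``occur at the same layer'' is precisely the content that the explicit closed forms of $d(x)$ and $s_x[y]$ are needed to verify; as sketched, that verification has not been done and the stated arithmetic contradicts it.
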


\begin{proof}
In the rectangular mesh, $\alpha_{n\ell}:= d\left(n,\ell\right)+1-s_{n\ell}[\ell]$, as defined in Ref. \cite{Russell2017DirectMatrices}. Define orthogonal axes $x$ and $y$ on the lattice such that for a node at $\left(n,\ell\right)$, traveling in the $+x$ direction gives the neighboring node at $\left(n+1,\ell+1\right)$ and traveling in the $+y$ direction gives the neighboring node at $\left(n-1,\ell+1\right)$, as depicted in Figure \ref{fig:rectangularalphamn}. For even \{odd\} $N$, let the node at $\left(n,\ell\right)=\left(1,1\right)$ have $x=1$ and the node at $\left(n,\ell\right)=\left(N-1,1\{2\}\right)$ have $y=1$. Then there is a one-to-one mapping such that $\left(x,y\right)=\left(\frac{n+\ell}{2},\frac{\ell-n}{2}+\floor{\frac{N}{2}}\right)$, as shown in Figure \ref{fig:rectangularalphamn}, and it suffices to prove the lemma by induction in this diagonal basis.

Since $d\left(n,\ell\right)$ is defined to be the length of a diagonal along paths of constant $n+\ell$, it depends only on $x$, so we rewrite $d\left(n,\ell\right)\mapsto d(x)$ explicitly:
\begin{equation}
    d(x)=\begin{cases}
        2x-1 & x\le\floor{\frac{N}{2}}\\
        2(N-x) & x>\floor{\frac{N}{2}}
    \end{cases}.
\end{equation}

Similarly, since $s_{n\ell}[\ell]$ is enumerated along a diagonal, it depends only on $y$, and we convert $s_{n\ell}[\ell]\rightarrow s_{x}[y]$ from the sequence definition of Ref. \cite{Russell2017DirectMatrices} to an explicit lattice form:
\begin{equation}
    s_{x}[y]=\begin{cases}
        2\left(\floor{\frac{N}{2}}-y\right)+1 & y\le\floor{\frac{N}{2}}\\
        2\left(y-\floor{\frac{N}{2}}\right) & y>\floor{\frac{N}{2}}
    \end{cases}.
\end{equation}
In this diagonal basis, we want to show that
\begin{equation}\label{eqn:rectangularInduction}
    d(x)+1-s_{x}[y]=\abs{I_{xy}}+\abs{O_{xy}}-N-1.
\end{equation}
There are two boundaries at $x,y=\floor{\frac{N}{2}}$ which separate four quadrants that must be considered, depicted by gray lines in Figure \ref{fig:rectangularalphamn}. We will induct on $x$ and $y$ within each quadrant, then induct on $x$ or $y$ across each of the two boundaries. 

Suppose that Equation \ref{eqn:rectangularInduction} holds for some arbitrary $x'y'$ in the mesh, such that $d\left(x'\right)+1-s_{x'}[y']=\abs{I_{x'y'}}+\abs{O_{x'y'}}-N-1$. First, we induct on $x$ and $y$ within each quadrant; the results are tabulated in Table \ref{table:quadrantinduction}. In every case, $d(x)-s_{x}[y]-\abs{I_{xy}}-\abs{O_{xy}}=d\left(n,\ell\right)-s_{x'}[y']-\abs{I_{x'y'}}-\abs{O_{x'y'}}$, so Equation \ref{eqn:rectangularInduction} remains satisfied. 

Next, we induct across the $x,y=\floor{\frac{N}{2}}$ boundaries, shown in Table \ref{table:borderInduction}. Again, in every case, $d(x)-s_{x}[y]-\abs{I_{xy}}-\abs{O_{xy}}=d\left(n,\ell\right)-s_{x'}[y']-\abs{I_{x'y'}}-\abs{O_{x'y'}}$, satisfying Equation \ref{eqn:rectangularInduction}. 

Finally, note that the base case of the top left MZI at $(n, \ell)=(1,1)$, $(x,y)=\left(1,\floor{\frac{N}{2}}\right)$ holds, with $d(x)+1-s_{x}[y]=1=2+N-N-1=\abs{I_{xy}}+\abs{O_{xy}}-N-1$. This completes the proof in the $(x,y)$ basis, and since there is a one-to-one mapping between $(x,y)\leftrightarrow(n,\ell)$, $\alpha_{n\ell}=\abs{I_{n\ell}}+\abs{O_{n\ell}}-N-1$ holds by induction.
\end{proof}

\begin{figure*}[t]
    \centering
    \includegraphics[width=0.48\textwidth]{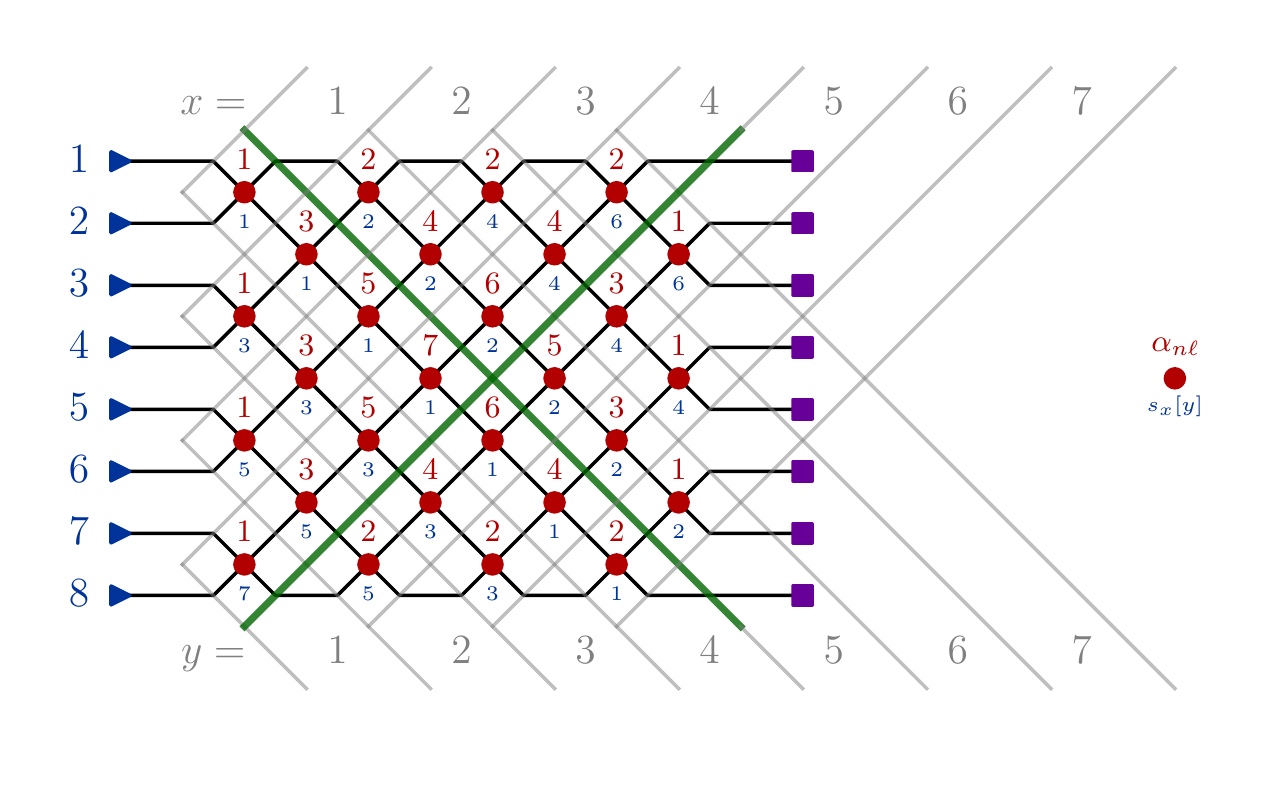}
    \includegraphics[width=0.48\textwidth]{14b_mesh_neq8_alphanl.pdf}
    \caption{Rectangular decomposition for even ($N=8$) and odd ($N=7$) meshes, showing the diagonal $x,y$ basis. Values for $\alpha_{n\ell}$ are shown in red above each MZI, with values for $s_x [y]$ shown in blue below. The critical boundaries of $x,y = \floor{\frac{N}{2}}$ separating the different quadrants are drawn in green. (Boundaries are offset for visual clarity.)}
    \label{fig:rectangularalphamn}
\end{figure*}

\begin{table*}[t]\label{table:quadrantinduction}
\centering%
\setlength{\tabcolsep}{1.0em} 
{\renewcommand{\arraystretch}{1.2}
\begin{tabular}{llllll}
\toprule 
Quadrant & Induction & $d(x)=\cdots$ & $s_{x}\left[y\right]=\cdots$ & $\abs{I_{xy}}=\cdots$ & $\abs{O_{xy}}=\cdots$ \\
\midrule
$x'\le\floor{\frac{N}{2}},y'\le\floor{\frac{N}{2}}$ & $x=x'-1$ & $d\left(n,\ell\right)-2$ & $s_{x'}[y']$ & $\abs{I_{x'y'}}-2$ & $\abs{O_{x'y'}}$ \\
 & $y=y'-1$ & $d\left(n,\ell\right)$ & $s_{x'}[y']+2$ & $\abs{I_{x'y'}}-2$ & $\abs{O_{x'y'}}$ \\
$x'\le\floor{\frac{N}{2}},y'>\floor{\frac{N}{2}}$ & $x=x'-1$ & $d\left(n,\ell\right)-2$ & $s_{x'}[y']$ & $\abs{I_{x'y'}}-2$ & $\abs{O_{x'y'}}$ \\
 & $y=y'+1$ & $d\left(n,\ell\right)$ & $s_{x'}[y']+2$ & $\abs{I_{x'y'}}$ & $\abs{O_{x'y'}}-2$ \\
$x'>\floor{\frac{N}{2}},y'\le\floor{\frac{N}{2}}$ & $x=x'+1$ & $d\left(n,\ell\right)-2$ & $s_{x'}[y']$ & $\abs{I_{x'y'}}$ & $\abs{O_{x'y'}}-2$ \\
 & $y=y'-1$ & $d\left(n,\ell\right)$ & $s_{x'}[y']+2$ & $\abs{I_{x'y'}}-2$ & $\abs{O_{x'y'}}$ \\
$x'>\floor{\frac{N}{2}},y'>\floor{\frac{N}{2}}$ & $x=x'+1$ & $d\left(n,\ell\right)-2$ & $s_{x'}[y']$ & $\abs{I_{x'y'}}$ & $\abs{O_{x'y'}}-2$ \\
 & $y=y'+1$ & $d\left(n,\ell\right)$ & $s_{x'}[y']+2$ & $\abs{I_{x'y'}}$ & $\abs{O_{x'y'}}-2$ \\
\bottomrule
\end{tabular}
}
\caption{Induction on $x$ and $y$ within each of the quadrants in the mesh.}
\end{table*}

\begin{table*}[t] \label{table:borderInduction}
\centering%
\setlength{\tabcolsep}{1.0em} 
{\renewcommand{\arraystretch}{1.2}
\begin{tabular}{lllllll}
\toprule 
$x'$ & $y'$ & Induction & $d(x)=\cdots$ & $s_{x}\left[y\right]=\cdots$ & $\abs{I_{xy}}=\cdots$ & $\abs{O_{xy}}=\cdots$ \\
\midrule
$x'=\floor{\frac{N}{2}}$ & any & $x=x'+1$ & $d\left(n, \ell\right)-\{+\}1$ & $s_{x'}[y']$ & $\abs{I_{x'y'}}+0\{1\}$ & $\abs{O_{x'y'}}-1\{0\}$ \\
any & $y'=\floor{\frac{N}{2}}$ & $y=y'+1$ & $d\left(n, \ell\right)$ & $s_{x'}[y']+1$ & $\abs{I_{x'y'}}$ & $\abs{O_{x'y'}}-1$ \\
\bottomrule
\end{tabular}
}
\caption{Induction on $x$ or $y$ across each of the borders of $x,y=\protect\floor{\frac{N}{2}}$.}
\end{table*}

\end{document}